\newtheorem{definition}{Definition}
\newtheorem{theorem}{Theorem}
\newtheorem{lemma}{Lemma}
\newtheorem{protocol}{Protocol}
\begin{document}

\title{The Oblivious Transfer Capacity of the Wiretapped Binary Erasure Channel}

\author{\IEEEauthorblockN{Manoj Mishra and Bikash Kumar Dey }
\IEEEauthorblockA{IIT Bombay, India\\
Email: \{mmishra,bikash\}@ee.iitb.ac.in}
\and
\IEEEauthorblockN{Vinod M. Prabhakaran}
\IEEEauthorblockA{TIFR, Mumbai, India\\
Email: vinodmp@tifr.res.in}
\and
\IEEEauthorblockN{Suhas Diggavi}
\IEEEauthorblockA{UCLA, USA\\
Email: suhas@ee.ucla.edu}}

\maketitle

\begin{abstract}
We consider oblivious transfer between Alice and Bob in the presence of an eavesdropper Eve when there is a broadcast channel from Alice to Bob and Eve. In addition to the secrecy constraints of Alice and Bob, Eve should not learn the private data of Alice and Bob. When the broadcast channel consists of two independent binary erasure channels, we derive the oblivious transfer capacity for both 2-privacy (where the eavesdropper may collude with either party) and 1-privacy (where there are no collusions).
\end{abstract}

\IEEEpeerreviewmaketitle

\section{Introduction}

The goal of secure multiparty computation (MPC) is for mutually distrusting parties to collaborate in computing functions of their data, but without revealing anything more about their data to others than what they can infer from the function outputs and data. Useful applications of secure MPC include voting, auctions and data-mining amongst several others, see e.g., \cite[Chap.
1]{CramDamNiel}. It is well known that information theoretically (unconditionally) secure computation is not possible, in general (i.e. for arbitrary functions), between two parties with noiseless communication and only common and private randomness. A combinatorial charaterization of functions that can be securely computed by two parties is given in \cite{kushilevitz1992}. Two-party secure computation, in general, requires additional stochastic resources.  Specifically, a noisy channel between the parties provides a means to achieve secure computation~\cite{CrepKilian1988}.

Oblivious Transfer (OT) has been proposed as a basic primitive (which can be derived from noisy channels) on which secure computation can be founded \cite{jkilian1988, jkilian2000}. One-out-of-two (1-of-2) string OT is a secure 2-party primitive computation, where one party, Alice, has two strings of equal lengths out of which, the other party, Bob, obtains exactly one string of his choice without Alice finding out the identity of the string selected by Bob. The (string) OT capacity of a discrete memoryless channel is the largest
string-length-per-channel-use that can be supported. OT capacity of discrete memoryless channels has been studied in~\cite{NascWinter2008,ot2007,PintoDowsMorozNasc2011}. In~\cite{NascWinter2008}, a lower bound on the string OT capacity of noisy channels and source distributions was obtained for honest-but-curious participants (i.e., the parties do not deviate from the prescribed protocol, but attempt to derive information about the other party's input that they are not allowed to know from everything they have access to at the end of the
protocol). \cite{ot2007}~characterizes the string OT capacity for generalized erasure channels, when the two parties are honest-but-curious. \cite{PintoDowsMorozNasc2011} shows that this honest-but-curious string OT capacity of generalized erasure channels can, in fact, be achieved even when the two parties are malicious. 

A natural consideration when using noisy channels is the presence of third parties who may derive useful information about the
computation. For example, consider the noisy resource as a wireless channel.  In this case, an eavesdropper who receives partial
information about the transmissions can use it to deduce the output or data of the parties. Motivated by this, we study the OT capacity of an erasure channel in the presence of an eavesdropper (Figure~\ref{fig:ot-setup}). To the best of our knowledge, this
problem has not been studied before. We limit our study to the case of honest-but-curious parties Alice and Bob and a passive eavesdropper Eve. We consider secrecy regimes where Eve may collude with Alice or Bob (2-privacy) and where there is no such collusion (1-privacy). These requirements are made more precise in the next section. We derive the 1-of-2 string OT capacity, for both 1-privacy and 2-privacy, in the setup of Figure~\ref{fig:ot-setup} when Bob and Eve receive independently erased versions of Alice's transmissions.

The rest of the paper is organized as follows. Section \ref{sec:prob_defn} gives the precise problem definition and states the capacity results that have been proved. Section \ref{sec:ach-protocol} gives the achievability part of the proof of our results, by describing protocols achieving any 2-private and 1-private rate below their respective capacities, for the setp of Figure~\ref{fig:ot-setup}. The converse part for our results are proved in Section \ref{sec:converse}. Most of the rate upper bounds we have hold for the general case of Figure~\ref{fig:ot-setup-bcast}, except for one regime in 1-privacy case where the upper bound is specific to the setup  of Figure~\ref{fig:ot-setup}.

\section{Problem Definition and Statement of Results}
\label{sec:prob_defn}
In the setup of Figure~\ref{fig:ot-setup-bcast}, Alice has two independent, uniformly distributed $m$-length bit strings $K_0,K_1$ and Bob has a uniformly distributed choice bit $C$ independent of $K_0,K_1$. Alice is connected to Bob and Eve by a discrete memoryless broadcast channel defined by the conditional distribution $p_{YZ|X}$. Further, Alice and Bob can communicate over an error-free public channel of unlimited capacity, with Eve able to receive every message sent on this pulic channel. Alice, Bob and Eve are \emph{honest-but-curious} participants in the protocols that run in this setup.
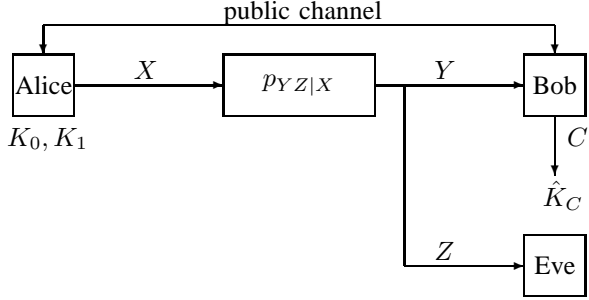
\begin{figure}[h]
\setlength{\unitlength}{0.8cm}
\centering
\begin{picture}(10.5,5.5)
\put(1,3){\framebox(1,1){Alice}}
\put(4.5,3){\framebox(2.5,1){$p_{YZ|X}$}}
\put(9.5,3){\framebox(1,1){Bob}}
\put(9.5,0){\framebox(1,1){Eve}}
\put(2,3.5){\vector(1,0){2.5}}
\put(7,3.5){\vector(1,0){2.5}}
\put(1.5,4.5){\vector(0,-1){0.5}}
\put(10,4.5){\vector(0,-1){0.5}}
\put(1.5,4.5){\line(1,0){8.5}}
\put(7.5,3.5){\line(0,-1){3}}
\put(7.5,0.5){\vector(1,0){2}}
\put(10,3){\vector(0,-1){1}}
\put(0.9,2.5){$K_0,K_1$}
\put(10.2,2.5){$C$}
\put(9.8,1.5){$\hat{K}_C$}
\put(3,3.6){$X$}
\put(8,3.6){$Y$}
\put(8,0.6){$Z$}
\put(4.5,4.6){public channel}
\end{picture}
\caption{Setup for obtaining Oblivious Transfer.}
\label{fig:ot-setup-bcast}
\end{figure}

\begin{definition}
An \emph{($m,n,k$) protocol} uses the broadcast channel at some instances $i_1,i_2,..,i_n \in \{1,\ldots,k\}$ and the public channel at instances $\{1, \ldots, k\}\backslash\{i_1,i_2,..,i_n\}$  and takes the following steps :
 \begin{enumerate}
  \item At the begining of the protocol, Alice and Bob generate private random variables $M,N$ respectively, which are independent of each other and all other system variables available.
  \item $i \notin \{i_1,i_2,..,i_n\}$: $F_i = F_i(K_0,K_1,M,F^{i-1})$ is the public message from Alice, if Alice is the one initiating a public message at time $i$.
  \item $i< i_1$: $F_i = F_i(C,N,F^{i-1})$ is the public message from Bob, if Bob is the one initiating a public message at time $i$.
  \item $i = i_j$: $X_j = X_j(K_0,K_1,M,F^{i-1})$, $F_i = \emptyset$.
  \item $i_j < i < i_{j+1}$: $F_i = F_i(C,N,F^{i-1},Y^j)$ is the public message from Bob, if Bob is the one initiating a public message at time $i$.
 \end{enumerate}
\end{definition}
The protocol computes $\hat{K}_c = \hat{K}(C,N,F^k,Y^n)$ as Bob's string at the end.

We define the \emph{views} of Alice, Bob and Eve at the end of the protocol to be, respectively,
\begin{align*}
  U_k = (K_0,K_1,M,F^k),\,
  V_k = (C,N,F^k,Y^n),\,
  W_k = (F^k,Z^n).
\end{align*}
\begin{definition}
\label{defn:ach2p_rate}
A non-negative number $R_{2P}$ is said to be an \emph{achievable 2-private rate} if there exists a sequence of ($m,n,k$) protocols,  with $\frac{m}{n} \longrightarrow R_{2P}$ as $n \longrightarrow \infty$, such that
\begin{align}
    P[\hat{K}_C \neq K_C] &\longrightarrow 0,
    \label{eqn:2ach_rate_1}\\
    I(K_{\overline{C}} ; V_k,W_k) &\longrightarrow 0,
    \label{eqn:2ach_rate_2}\\
    I(C ; U_k,W_k) &\longrightarrow 0,
    \label{eqn:2ach_rate_3}\\
    I(K_0,K_1,C ; W_k) &\longrightarrow 0,
    \label{eqn:2ach_rate_4}
\end{align}
where $\overline{C}=C\oplus 1$.
\end{definition}

\begin{definition}
The \emph{2-private capacity}, $C_{2P}$ is defined as the supremum of all achievable 2-private rates.
\end{definition}

\begin{definition}
\label{defn:ach1p_rate}
A non-negative number $R_{1P}$ is said to be an \emph{achievable 1-private rate} if there exists a sequence of ($m,n,k$) protocols,  with $\frac{m}{n} \longrightarrow R_{1P}$ as $n \longrightarrow \infty$, such that
 \begin{align}
    P[\hat{K}_C \neq K_C] &\longrightarrow 0,
    \label{eqn:1ach_rate_1}\\
    I(K_{\overline{C}} ; V_k) &\longrightarrow 0,
    \label{eqn:1ach_rate_2}\\
    I(C ; U_k) &\longrightarrow 0,
    \label{eqn:1ach_rate_3}\\
    I(K_0,K_1,C ; W_k) &\longrightarrow 0,
    \label{eqn:1ach_rate_4}
 \end{align}
where $\overline{C}=C\oplus 1$.
\end{definition}

\begin{definition}
The \emph{1-private capacity}, $C_{1P}$ is defined as the supremum of all achievable 1-private rates.
\end{definition}

Our main result is the characterization of $C_{2P}$ and $C_{1P}$ for the setup of Figure~\ref{fig:ot-setup}. In this specific version of the setup of Figure~\ref{fig:ot-setup-bcast}, the broadcast channel is made up of two independent binary erasure channels (BECs). A BEC with erasure probability $\epsilon_1$ (BEC($\epsilon_1$)) connects Alice to Bob and a BEC($\epsilon_2$) connects Alice to Eve. BEC($\epsilon_1$) acts independently of BEC($\epsilon_2$) and no assumption is made on the relative values of $\epsilon_1$ and $\epsilon_2$.

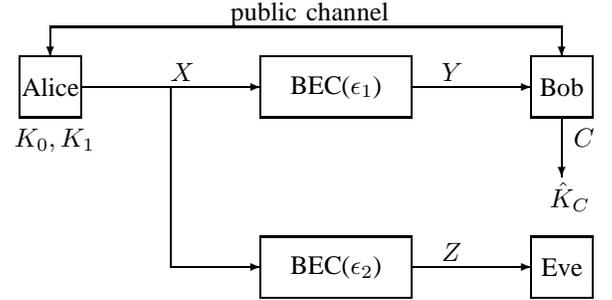
\begin{figure}[h]
\setlength{\unitlength}{0.8cm}
\centering
\begin{picture}(10.5,5.5)
\put(1,3){\framebox(1,1){Alice}}
\put(5,3){\framebox(2.5,1){BEC($\epsilon_1$)}}
\put(9.5,3){\framebox(1,1){Bob}}
\put(5,0){\framebox(2.5,1){BEC($\epsilon_2$)}}
\put(9.5,0){\framebox(1,1){Eve}}
\put(2,3.5){\vector(1,0){3}}
\put(7.5,3.5){\vector(1,0){2}}
\put(1.5,4.5){\vector(0,-1){0.5}}
\put(10,4.5){\vector(0,-1){0.5}}
\put(1.5,4.5){\line(1,0){8.5}}
\put(3.5,3.5){\line(0,-1){3}}
\put(3.5,0.5){\vector(1,0){1.5}}
\put(7.5,0.5){\vector(1,0){2}}
\put(10,3){\vector(0,-1){1}}
\put(0.9,2.5){$K_0,K_1$}
\put(10.2,2.5){$C$}
\put(9.8,1.5){$\hat{K}_C$}
\put(3.5,3.6){$X$}
\put(8,3.6){$Y$}
\put(8,0.6){$Z$}
\put(4.5,4.6){public channel}
\end{picture}
\caption{Setup with the broadcast channel made up of two independent BECs.}
\label{fig:ot-setup}
\end{figure}

We prove the following theorems for the setup of Figure~\ref{fig:ot-setup}.
\begin{theorem}
\label{thm:2p_capacity}
\[ C_{2P} = \epsilon_2 \cdot \min \{\epsilon_1, 1 - \epsilon_1 \}.\]
\end{theorem}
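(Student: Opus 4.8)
The plan is to establish the two matching bounds $C_{2P}\ge\epsilon_2\min\{\epsilon_1,1-\epsilon_1\}$ and $C_{2P}\le\epsilon_2\min\{\epsilon_1,1-\epsilon_1\}$ separately: the first by exhibiting a protocol, the second by an information-theoretic argument.

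\emph{Achievability.} I would layer a privacy-amplification step against Eve on top of the honest-but-curious erasure-channel OT protocol of \cite{ot2007}. Alice sends $n$ i.i.d.\ uniform bits $X_1,\dots,X_n$ on the BEC. Since, by the law of large numbers, Bob has about $(1-\epsilon_1)n$ unerased and $\epsilon_1 n$ erased positions with probability tending to $1$, he publicly reports two disjoint index sets $L_0,L_1$ of common size $\ell=(\min\{\epsilon_1,1-\epsilon_1\}-\delta')n$, where $L_C$ consists only of unerased positions and $L_{\overline{C}}$ only of erased positions. Because Alice never observes the erasure pattern, the labelled pair $(L_0,L_1)$ is statistically symmetric in $0\leftrightarrow1$ from her point of view, so she learns nothing about $C$. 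The new ingredient is Eve's ignorance: for each fixed $b$, Eve independently erases each of the $\ell$ symbols of $L_b$ with probability $\epsilon_2$, so with high probability she is ignorant of about $\epsilon_2\ell$ of them, and the same holds for the coalition of Bob and Eve on $L_{\overline{C}}$ since Bob knows none of $X_{L_{\overline{C}}}$. Alice draws seeds of two-universal hash families with output length $m=(\epsilon_2\min\{\epsilon_1,1-\epsilon_1\}-\delta)n<\epsilon_2\ell$, publishes them, and publishes $F_b=K_b\oplus h_b(X_{L_b})$ for $b=0,1$. Bob recovers $K_C=F_C\oplus h_C(X_{L_C})$, which gives \eqref{eqn:2ach_rate_1}; the Leftover Hash Lemma then yields \eqref{eqn:2ach_rate_2} ($F_{\overline{C}}$ is essentially a one-time pad on $K_{\overline{C}}$ because the Bob--Eve coalition retains $\approx\epsilon_2\ell>m$ bits of entropy about $X_{L_{\overline{C}}}$), \eqref{eqn:2ach_rate_4} (Eve alone retains $\approx\epsilon_2\ell$ bits about each of $X_{L_0}$ and $X_{L_1}$, and the two sources are independent since $L_0,L_1$ are disjoint and the channel is memoryless), while \eqref{eqn:2ach_rate_3} holds because Alice's and Eve's combined view is symmetric in the two values of $C$. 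Taking $\delta,\delta'\downarrow 0$ gives the claimed rate.

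\emph{Converse.} For any $(m,n,k)$ protocol meeting Definition~\ref{defn:ach2p_rate}, I would derive two single-letter upper bounds on $m/n$ whose minimum is the claimed capacity; both arguments can be stated for a general broadcast channel $p_{YZ|X}$ and then specialised to the two independent BECs, writing $Y^n$ as (unerased symbols, erasure pattern), and likewise $Z^n$, with the two erasure processes memoryless and mutually independent. (A) Correctness \eqref{eqn:2ach_rate_1} gives $m\approx I(K_C;V_k)$, and secrecy of $K_C$ from Eve ($I(K_C;W_k)\approx0$, a consequence of \eqref{eqn:2ach_rate_4}) lets one run the usual OT-converse chain with the genie $(C,N,K_{\overline{C}})$ and single-letterise to $m\le n\,I(X;Y\mid Z)+o(n)\le\epsilon_2(1-\epsilon_1)\,n+o(n)$, the last inequality because $I(X;Y\mid Z)\le\epsilon_2(1-\epsilon_1)$ for the two BECs. (B) Secrecy of $K_{\overline{C}}$ against the Bob--Eve coalition \eqref{eqn:2ach_rate_2} forces $H(K_{\overline{C}}\mid V_k,W_k)\approx m$; since $K_{\overline{C}}$ is Alice's input and her only outgoing signals are $X^n$ and the public transcript, the residual uncertainty about $K_{\overline{C}}$ must be carried by the symbols erased by Bob and by Eve simultaneously, and single-letterisation gives $m\le n\,H(X\mid Y,Z)+o(n)\le\epsilon_1\epsilon_2\,n+o(n)$. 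Hence $m/n\le\min\{\epsilon_2(1-\epsilon_1),\epsilon_1\epsilon_2\}+o(1)=\epsilon_2\min\{\epsilon_1,1-\epsilon_1\}+o(1)$.

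\emph{Main difficulty.} I expect the converse to be the delicate part. The protocol is interactive over the public channel, and Alice keeps injecting fresh private randomness while remaining blind to both erasure patterns, so the informal statement ``Eve independently sees a $(1-\epsilon_2)$-fraction of whatever Bob can exploit'' has to be turned into a single-letter bound that survives an arbitrary number of rounds. The real work is in choosing the genie variables so that the mutual-information conditions \eqref{eqn:2ach_rate_2}--\eqref{eqn:2ach_rate_4} enter without slack, and in checking that the public transcript, being available to Eve, cannot manufacture correlation that breaks the product structure $\epsilon_2\cdot\{\epsilon_1\text{ or }1-\epsilon_1\}$. On the achievability side the only genuine care needed is the concentration bookkeeping that makes the Leftover Hash Lemma applicable uniformly over the randomly chosen index sets.
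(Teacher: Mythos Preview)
Your proposal is correct and arrives at the same two single-letter bounds, but both halves take a genuinely different route from the paper.

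\textbf{Achievability.} The paper builds a two-layer protocol: first the basic erasure-channel OT keys $T_0,T_1$ from sets $G,B$ of size $nr$, then \emph{separate} secret keys $S_0,S_1$ (via the Maurer--Wolf scheme) from additional index sets $G_S,B_S$, which are then stretched by a random linear code and XORed on top. Your approach collapses these two layers into one: you keep only the sets $L_0,L_1$ of size $\ell\approx\min\{\epsilon_1,1-\epsilon_1\}n$ and apply privacy amplification (Leftover Hash Lemma) directly to $X_{L_b}$, exploiting that the Bob--Eve coalition has conditional min-entropy $\approx\epsilon_2\ell$ in $X_{L_{\overline C}}$ and Eve alone has $\approx\epsilon_2\ell$ in each $X_{L_b}$. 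This is shorter and conceptually cleaner; the paper's layered construction has the merit of making the two obstructions (Bob's ignorance of $T_{\overline C}$ and Eve's ignorance of $S_b$) visibly separate, but at the cost of extra index sets and a random-coding lemma (Lemma~\ref{lem:random_coding_lemma}).

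\textbf{Converse.} The paper does \emph{not} single-letterise from scratch. It argues by reduction: setting $C=0$ turns the protocol into secret-key agreement against Eve, so $C_{2P}\le\max_{p_X}I(X;Y|Z)$ by~\cite{sec-key1993}; and a 2-private protocol is in particular an OT protocol between Alice and the combined receiver Bob--Eve, so $C_{2P}\le\max_{p_X}H(X|Y,Z)$ by the OT converse of~\cite{ot2007}. Your plan---give Bob the genie $(C,N,K_{\overline C})$ and run the interactive single-letterisation by hand---reproduces exactly those two bounds but re-derives them inside the present model. That is more work, and the difficulty you flag (handling unbounded public interaction) is precisely what the paper sidesteps by invoking the already-proved secret-key and OT capacity converses. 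Either route is valid; the paper's is shorter, yours is more self-contained.
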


\begin{theorem}
\label{thm:1p_capacity}
\[C_{1P} = \left\{ \begin{array}{ll} \epsilon_1, & \epsilon_1 < \frac{\epsilon_2}{2} \\ \frac{\epsilon_2}{2}, & \frac{\epsilon_2}{2} \leq \epsilon_1 < \frac{1}{2} \\ \epsilon_2(1 - \epsilon_1), & \epsilon_1 \geq \frac{1}{2}  \end{array} \right.\]
\end{theorem}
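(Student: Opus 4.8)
The plan is to prove Theorem~\ref{thm:1p_capacity} by establishing matching achievability and converse bounds. It is convenient to first rewrite the claimed value in the symmetric form $C_{1P}=\min\{\epsilon_1,\ \tfrac{\epsilon_2}{2},\ \epsilon_2(1-\epsilon_1)\}$, which one checks reduces to the three-case expression in the statement (the binding term being $\epsilon_1$ on $\epsilon_1<\epsilon_2/2$, $\epsilon_2/2$ on $\epsilon_2/2\le\epsilon_1<1/2$, and $\epsilon_2(1-\epsilon_1)$ on $\epsilon_1\ge1/2$). For achievability I would use a single family of protocols parametrised by the target length $m=(R-\delta)n$, $R$ the minimum above and $\delta>0$ arbitrary. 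Alice transmits a uniform $X^n$; write $G_B,G_E\subseteq\{1,\dots,n\}$ for the (random, independent) sets of positions received unerased by Bob and Eve and $E_B=\{1,\dots,n\}\setminus G_B$, so $|G_B|\approx(1-\epsilon_1)n$, $|G_E|\approx(1-\epsilon_2)n$. Bob sets $L_C=G_B$ and picks $L_{\overline{C}}$ uniformly among sets of size $|G_B|$ containing $E_B$ (for $\epsilon_1<1/2$ this forces a controlled overlap with $L_C$; for $\epsilon_1\ge1/2$ it gives $L_{\overline{C}}\subseteq E_B$), the construction arranged so that the unordered pair $\{G_B,L_{\overline{C}}\}$ is exchangeable. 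He publishes $(L_0,L_1)$ — equal to $(L_C,L_{\overline{C}})$ or $(L_{\overline{C}},L_C)$ according to $C$ — with two independent universal hash functions $h_0,h_1$ of output length $m$, and Alice replies with $S_j=K_j\oplus h_j(X_{L_j})$. Bob outputs $\hat{K}_C=S_C\oplus h_C(X_{G_B})$, which is correct since $L_C=G_B$.

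The four conditions of Definition~\ref{defn:ach1p_rate} would be checked as follows. Correctness is immediate. Condition~\eqref{eqn:1ach_rate_3} (hiding $C$ from Alice) follows from exchangeability of $\{G_B,L_{\overline{C}}\}$ together with independence of $X$, $K_0,K_1$ and $h_0,h_1$ from $C$: the joint law of $(L_0,L_1,h_0,h_1,S_0,S_1)$ is invariant under swapping the two coordinates, which is exactly why the symmetric choice of $L_{\overline{C}}$ is needed. Condition~\eqref{eqn:1ach_rate_2} (hiding $K_{\overline{C}}$ from Bob) follows from the leftover hash lemma: $X_{E_B}$ is $\approx\epsilon_1 n$ fresh uniform bits absent from $V_k$, so $h_{\overline{C}}(X_{L_{\overline{C}}})$ is nearly uniform given Bob's view provided $m\lesssim\epsilon_1 n$. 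For condition~\eqref{eqn:1ach_rate_4}, $C$ is hidden from Eve because $(L_0,L_1)$ is independent of $C$ and $Z^n$ is produced from $X^n$ independently of Bob's side, while $(K_0,K_1)$ is hidden by a joint/chain-rule form of the leftover hash lemma applied to $(h_0(X_{L_0}),h_1(X_{L_1}))$: conditioned on $Z^n$ the source $X_{L_0\cup L_1}$ retains $\approx\epsilon_2|L_0\cup L_1|$ bits of entropy localised on the Eve-erased positions, and one needs both $2m$ below this and $m$ below the per-mask quantities $\epsilon_2|L_0|$, $\epsilon_2|L_1|$; these amount to $2m\lesssim\epsilon_2 n$ and $m\lesssim\epsilon_2(1-\epsilon_1)n$. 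Together with the Bob constraint $m\lesssim\epsilon_1 n$ this gives $R=\min\{\epsilon_1,\epsilon_2/2,\epsilon_2(1-\epsilon_1)\}$; atypical channel realisations contribute only vanishing error. A subtlety is that $L_0$ and $L_1$ overlap when $\epsilon_1<1/2$, so the collision-probability estimate behind the joint leftover hash lemma has to be verified separately in each regime.

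For the converse I would prove three matching upper bounds, using Fano's inequality and Definition~\ref{defn:ach1p_rate} to get $H(K_C\mid V_k)\le n\epsilon_n$, $H(K_{\overline{C}}\mid V_k)\ge m-n\epsilon_n$, $H(K_0,K_1\mid W_k)\ge 2m-n\epsilon_n$ and $I(C;U_k)\le n\epsilon_n$ with $\epsilon_n\to0$, together with the per-use BEC bounds $H(X^n\mid Y^n)\le\epsilon_1 n$ and $H(X^n\mid Z^n)\le\epsilon_2 n$ (valid since the erasure indicator of each channel use is independent of the input, so the expected number of erasures is $\epsilon_1 n$, resp.\ $\epsilon_2 n$). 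The bound $m\le\epsilon_1 n+n\epsilon_n$ comes from $m-n\epsilon_n\le H(K_{\overline{C}}\mid V_k)\le H(X^n\mid Y^n)+H(K_{\overline{C}}\mid C,N,F^k,Y^n,X^n)$; the bound $m\le\tfrac{\epsilon_2}{2}n+n\epsilon_n$ from the analogous $2m-n\epsilon_n\le H(K_0,K_1\mid W_k)\le H(X^n\mid Z^n)+H(K_0,K_1\mid F^k,X^n)$; and the bound $m\le\epsilon_2(1-\epsilon_1)n+n\epsilon_n$, binding exactly when $\epsilon_1\ge1/2$, by combining the fact that Bob decodes $K_C$ (so its $m$ bits are carried by what Bob learns from $Y^n$, essentially $X$ on the $\approx(1-\epsilon_1)n$ symbols of $G_B$) with the fact that Eve learns nothing about $K_C$ (so those $m$ bits must live in the $\approx\epsilon_2(1-\epsilon_1)n$ symbols of $X_{G_B}$ erased for Eve), which uses independence of the two erasure channels. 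I expect this last bound to be the one the introduction flags as specific to Figure~\ref{fig:ot-setup}, since it genuinely exploits the product structure of the broadcast channel rather than holding for an arbitrary $p_{YZ|X}$.

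The main obstacle is the converse, and within it the two ``negligibility'' steps, that $H(K_{\overline{C}}\mid C,N,F^k,Y^n,X^n)$ and $H(K_0,K_1\mid F^k,X^n)$ are $o(n)$. These fail for a completely unconstrained protocol; the argument must invoke Alice-privacy, $I(C;U_k)\to0$, to reason about the counterfactual value of the choice bit — intuitively, because Alice's transmissions cannot depend on $C$, any transcript that lets a Bob with input $C$ recover $K_C$ must, together with the full input $X^n$, also pin down $K_{\overline{C}}$ — and everything must then be routed through single-letterised mutual-information expressions so that the per-symbol inequalities $H(X_i\mid Y_i)\le\epsilon_1$ and $H(X_i\mid Z_i)\le\epsilon_2$ can be applied. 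Obtaining the $\epsilon_2(1-\epsilon_1)$ bound with the correct constant, rather than only the weaker $\epsilon_2$, is the other delicate point and is where the two-independent-BEC structure is actually used.
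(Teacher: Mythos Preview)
Your converse is essentially the paper's. The two ``negligibility'' steps you isolate are exactly Lemma~\ref{lem:small_quant} (that $H(K_0,K_1\mid X^n,F^k)=o(n)$, proved using $I(C;U_k)\to 0$ just as you anticipate), and your three bounds match the paper's $\max_{p_X}H(X\mid Y)$, $\max_{p_X}I(X;Y\mid Z)$, and the $\epsilon_2/2$ bound. You have the attribution in your closing remark backwards, though: the $\epsilon_2(1-\epsilon_1)$ bound is simply the secret-key-capacity upper bound of Ahlswede--Csisz\'ar (run the protocol with $C=0$; then $K_0$ is a key of rate $m/n$ secret from Eve), which is fully general for Figure~\ref{fig:ot-setup-bcast}; it is the $\epsilon_2/2$ bound that the paper flags as specific to the independent-BEC setup --- although your formulation via $H(K_0,K_1\mid W_k)\le H(X^n\mid Z^n)+H(K_0,K_1\mid F^k,X^n)$ in fact reads as $C_{1P}\le\tfrac12\max_{p_X}H(X\mid Z)$ for any DMC.

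Your achievability is a genuinely different construction. The paper does not hash $X_{L_j}$ directly; instead it splits each side's index budget into an ``OT-key'' part $G,B$ (giving one-time pads $T_0,T_1$ of length $m$, Bob knowing only $T_C$) and a disjoint ``secret-key'' part $G_S,B_S$ (giving keys $S_0,S_1$ of length $\approx m(1-\epsilon_2)/\epsilon_2$, expanded by a random code to length-$m$ masks $\tilde S_0,\tilde S_1$ hidden from Eve), and sends $K_j\oplus T_j\oplus\tilde S_j$. The 1-private protocol differs from the 2-private one only in where $B_S$ is drawn (unerased, mixed, or erased leftover indices in the three regimes). This sidesteps your joint-extraction subtlety entirely: $T_{\overline C}$ is a full one-time pad against Bob, and $\tilde S_0,\tilde S_1$ come from disjoint index sets, so no joint leftover-hash over overlapping sources is needed. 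Your single-hash scheme is more economical and does work, but the point you flag is real: when $\epsilon_1<1/2$ the sets $L_0,L_1$ overlap, the pair $(h_0,h_1)$ is \emph{not} a universal family on $X_{L_0\cup L_1}$, and what saves you is a direct collision-probability computation (case-split on which of $X_{L_0},X_{L_1}$ agree with their primed copies) that yields exactly the two constraints $2m\lesssim\epsilon_2|L_0\cup L_1|=\epsilon_2 n$ and $m\lesssim\epsilon_2|L_j|=\epsilon_2(1-\epsilon_1)n$. You should carry that calculation out explicitly rather than leave it as a caveat; also, your description ``$L_{\overline C}$ contains $E_B$'' only makes sense for $\epsilon_1<1/2$ and should be stated separately for $\epsilon_1\ge 1/2$.
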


\section{Proof of Achievability}
\label{sec:ach-protocol}

We begin by briefly reviewing the achievable protocol for a BEC($\epsilon_1$) given in \cite{ot2007}. We note that the setup of Figure~\ref{fig:ot-setup},  for $\epsilon_2 = 1$, reduces to the setup in \cite{ot2007}. Suppose Bob wishes to obtain one of the strings of length $m$ from Alice. Alice transmits an i.i.d. uniform sequence of bits $X^n$ over the broadcast channel. Bob receives an erased version $Y^n$, of $X^n$. Bob will choose two sets of $m$ distinct indices (bit locations) each, a set $B$ (bad set) from the erased indices and a set $G$ (good set) from the unerased indices. Bob chooses these sets uniformly at random from among the possible choices. In this sketch we ignore the possibility that sufficient number of erased and unerased locations are not present; the probability of such an event will be made small enough by an appropriate choice of $m$ and $n$ in the sequel. For $C=0$, Bob assigns $(L_0,L_1) = (G,B)$; otherwise $(L_0,L_1) = (B,G)$. Bob sends $(L_0,L_1)$ to Alice. Alice will form {\em OT keys}, $T_0 = X^n|_{L_0}$ and $T_1 = X^n|_{L_1}$, where $X^n|_{L_0}$ denotes the sequence $X^n$ restricted to the locations in $L_0$. Alice sends $K_0 \oplus T_0$ and $K_1 \oplus T_1$ to Bob over the public channel. Since Bob knows $T_C$, he can obtain $K_C$. It is easy to verify that Alice obtains no information about $C$ and Bob obtains no information about the $K_{\overline{C}}$. 

In the setup of Figure~\ref{fig:ot-setup}, privacy against Eve is additionally required. Let us consider the case of 2-privacy first. In the above scheme, Eve will learn approximately a fraction $1-\epsilon_2$ of both the OT keys $T_0$ and $T_1$. Hence, Alice must additionally protect both the strings before sending them over the public channel. This will be accomplished by setting up additional secret keys (independent of $T_0,T_1$) which are secret from Eve as follows: 
Alice and Bob will create (as explained later) two independent {\em secret keys} $S_0$,$S_1$ (each of length approximately $m(1 - \epsilon_2)$), neither of which is known to Eve and only one of which, namely $S_C$, is known to Bob. Notice that Alice will remain unaware of the identity of the secret key known to Bob. Alice uses these secret keys to further encrypt the strings before sending them over the public channel. Specifically, $S_0$ is used to further encrypt $K_0\oplus T_0$ and $S_1$ for $K_1\oplus T_1$. This is done by   
Alice \emph{expanding} $S_0$, $S_1$ (each of length approximately $m(1 - \epsilon_2)$) to $\tilde{S}_0,\tilde{S}_1$ respectively (each of length $m$ bits), using a binary code (obtained using random coding argument) of rate about ($1-\epsilon_2$). Alice sends $K_0 \oplus T_0 \oplus \tilde{S}_0$ and $K_1 \oplus T_1 \oplus \tilde{S}_1$ to Bob over the public channel.

To generate $S_0,S_1$, Alice and Bob use the secret key agreement scheme of \cite{MaurerWolf2000}, which shows that Alice and Bob can agree upon a secret key which is almost perfectly secret from Eve (i.e. $I(S_i ; W_k) \longrightarrow 0$ as $n \longrightarrow \infty$, $i=1,2$), at rate $\epsilon_2$, if Bob received unerased transmissions from Alice.  So, to generate $S_0,S_1$, Bob will uniformly at random choose sets of indices $G_S$ (good secret key set) and $B_S$ (bad secret key set), of length approximately $\frac{m(1 - \epsilon_2)}{\epsilon_2}$ each, from, respectively, unerased and erased indices which were unused for OT key generation. As before, $G_S,B_S$ are sent to Alice in an order determined by $C$. Alice uses $X^n|_{G_S}$, $X^n|_{B_S}$ to generate the secret keys.

For 1-privacy, Bob may choose $B_S$ randomly from the erased {\em and} unerased  indices it has leftover after creating $G$, $B$ and $G_S$. The rest of the protocol remains the same as that for 2-privacy. Clearly, Bob may now know some or all of the bad secret key. Since this secret key is meant to provide security against Eve with whom Bob does not collude now, the secrecy condition is unaffected.

\subsection{Protocol for a achieving any 2-private rate $r < C_{2P}$}
\label{sec:ach_protocol_2P}

We now present a protocol which achieves any 2-private rate less than $C_{2P}$, for the setup of Figure~\ref{fig:ot-setup}. 
For any $\delta \in (0,1)$, we define
\begin{equation*}
\tilde{\epsilon}_2  := \epsilon_2(1 - \delta). 
\end{equation*}
In a protocol where Alice transmits $n$ bits over the BEC, let  $E, \overline{E}, E'$ be, respectively, the set of indices where Bob sees erasures, Bob sees non-erasures, and Eve sees erasures.
\begin{align*}
E  &:= \{ i \in \{i_1, \ldots, i_n\} : Y_i = \text{erasure} \},\\
\overline{E}  &:= \{ i \in \{i_1, \ldots, i_n\} : Y_i \neq \text{erasure} \},\\
E' &:= \{ i \in \{i_1, \ldots, i_n\} : Z_i = \text{erasure} \}.
\end{align*}
Let $\mathbb{U}(A)$ denotes a uniformly random choice from the set $A$.

The following lemma says that, with high probability, Eve will see at least $\tilde{\epsilon}_2$ fraction of its received sequence erased.
\begin{lemma}
\label{lem:provision_2}
\begin{equation*}
P\left[ \frac{|E'|}{n} \geq \tilde{\epsilon}_2 \right] \longrightarrow 1 \text{ exponentially in $n$}.
\end{equation*}
\end{lemma}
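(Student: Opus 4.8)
The quantity $|E'|$ is a sum of $n$ i.i.d.\ Bernoulli random variables, one for each of the $n$ uses of the broadcast channel: the $j$-th indicator is $\mathbbm{1}\{Z_{i_j} = \text{erasure}\}$, which equals $1$ with probability $\epsilon_2$ since BEC($\epsilon_2$) acts independently of everything else (in particular, independently of the inputs $X_j$ and of the erasure pattern on BEC($\epsilon_1$)). Hence $\mathbb{E}[|E'|] = n\epsilon_2$, and we want to lower bound the probability that $|E'|/n$ does not fall too far below its mean, namely below $\tilde\epsilon_2 = \epsilon_2(1-\delta) < \epsilon_2$.

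The plan is a direct Chernoff/Hoeffding bound on the lower tail of $|E'|$. Writing $|E'| = \sum_{j=1}^n B_j$ with $B_j$ i.i.d.\ Bernoulli($\epsilon_2$), the relative Chernoff bound gives
\begin{equation*}
P\!\left[ |E'| < (1-\delta)\, n\epsilon_2 \right] \;\le\; \exp\!\left( -\frac{\delta^2 n \epsilon_2}{2} \right),
\end{equation*}
so that
\begin{equation*}
P\!\left[ \frac{|E'|}{n} \geq \tilde\epsilon_2 \right] \;\ge\; 1 - \exp\!\left( -\frac{\delta^2 \epsilon_2}{2}\, n \right) \longrightarrow 1
\end{equation*}
exponentially fast in $n$, as claimed. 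One small point to handle carefully is that in an $(m,n,k)$ protocol the channel inputs $X_1,\dots,X_n$ may themselves be random and correlated with the public messages, but since the BEC($\epsilon_2$) erasures are generated independently of the inputs and of the BEC($\epsilon_1$) behaviour, the erasure indicators $B_j$ are still i.i.d.\ Bernoulli($\epsilon_2$) regardless of how the inputs are chosen; conditioning on the input sequence and then averaging leaves the bound unchanged.

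Strictly speaking there is nothing here that qualifies as a genuine obstacle — the statement is a textbook concentration inequality — so the only thing to be careful about is bookkeeping: confirming independence of the erasure process from the protocol's (possibly adaptive) inputs, and picking whichever standard tail bound (multiplicative Chernoff, or Hoeffding applied to $|E'|/n$) makes the exponent cleanest. I would present the multiplicative Chernoff version since it directly yields the $(1-\delta)$ form matching the definition of $\tilde\epsilon_2$.
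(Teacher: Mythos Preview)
Your proposal is correct and matches the paper's own proof, which simply states that the claim follows from the Chernoff bound; you have merely spelled out the details of that bound and verified the i.i.d.\ Bernoulli structure of the erasure indicators.
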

\begin{proof}
The claim follows from Chernoff bound.
\end{proof}

The following lemma says that for any rate $r < C_{2P}$ and a suitably low $\delta$ (to define $\tilde{\epsilon}_2$), Bob will have enough erased and unerased $Y_i$'s with which to run the protocol and achieve rate $r$. 
\begin{lemma}
\label{lem:provision_3}
Suppose $r < C_{2P}$ and $\delta < (1 - \frac{r}{C_{2P}})$. Then
\begin{align*}
 P \left[ |E| \geq \frac{nr}{\tilde{\epsilon_2}} \right] & \longrightarrow 1 \text{ exponentially in $n$}, \\
 P \left[ |\overline{E}| \geq \frac{nr}{\tilde{\epsilon_2}} \right] & \longrightarrow 1 \text{ exponentially in $n$}.
\end{align*}
\end{lemma}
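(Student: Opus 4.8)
The plan is to reduce both statements to standard Chernoff bounds for the binomial distribution. First I would observe that, because the $n$ uses of BEC($\epsilon_1$) act independently, the random variable $|E|$ is distributed as $\mathrm{Binomial}(n,\epsilon_1)$ and $|\overline{E}| = n - |E|$ as $\mathrm{Binomial}(n,1-\epsilon_1)$; in particular $\mathbb{E}|E| = n\epsilon_1$ and $\mathbb{E}|\overline{E}| = n(1-\epsilon_1)$. (The instances $i_1,\dots,i_n$ at which the broadcast channel is used are fixed by the protocol, so they contribute no extra randomness.)

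Next I would convert the hypotheses into a multiplicative gap between these expectations and the target $nr/\tilde{\epsilon}_2$. From $\delta < 1 - r/C_{2P}$ we get $1-\delta > r/C_{2P}$, and since $C_{2P} = \epsilon_2\min\{\epsilon_1,1-\epsilon_1\}$ and $\tilde{\epsilon}_2 = \epsilon_2(1-\delta)$ this rearranges to
\[
\frac{r}{\tilde{\epsilon}_2} \;<\; \min\{\epsilon_1,\,1-\epsilon_1\}.
\]
Hence there is a constant $\eta \in (0,1)$, depending only on $r,\delta,\epsilon_1,\epsilon_2$ and not on $n$, with $\frac{nr}{\tilde{\epsilon}_2} \le (1-\eta)\,n\epsilon_1$ and $\frac{nr}{\tilde{\epsilon}_2} \le (1-\eta)\,n(1-\epsilon_1)$ simultaneously. (If $\epsilon_1 \in \{0,1\}$ then $C_{2P}=0$ and the hypothesis $r < C_{2P}$ is vacuous, so we may assume $\epsilon_1 \in (0,1)$, whence both expectations are $\Theta(n)$.)

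Finally I would invoke the Chernoff lower-tail bound: for $S \sim \mathrm{Binomial}(n,p)$ and $\eta \in (0,1)$, $P[S < (1-\eta)np] \le \exp(-\eta^2 np/2)$. Taking $S = |E|$, $p = \epsilon_1$ gives $P[|E| < nr/\tilde{\epsilon}_2] \le P[|E| < (1-\eta)n\epsilon_1] \le e^{-\eta^2 n\epsilon_1/2}$, and taking $S = |\overline{E}|$, $p = 1-\epsilon_1$ gives $P[|\overline{E}| < nr/\tilde{\epsilon}_2] \le e^{-\eta^2 n(1-\epsilon_1)/2}$. Passing to complements yields the two claimed convergences to $1$, each at an exponential rate in $n$.

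I do not expect any genuine obstacle: the only point requiring care is the algebraic step that turns $\delta < 1 - r/C_{2P}$ into the strict inequality $r/\tilde{\epsilon}_2 < \min\{\epsilon_1,1-\epsilon_1\}$ and then extracts an $n$-independent slack $\eta$; once that is done, both bounds are immediate instances of Chernoff's inequality, exactly as in the proof of Lemma~\ref{lem:provision_2}.
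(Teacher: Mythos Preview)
Your proposal is correct and is exactly the approach the paper takes: the paper's entire proof is the single sentence ``The claims follow from Chernoff bound.'' Your write-up simply supplies the missing algebra (turning $\delta < 1 - r/C_{2P}$ into $r/\tilde{\epsilon}_2 < \min\{\epsilon_1,1-\epsilon_1\}$) and the explicit invocation of the binomial lower-tail bound, which the paper leaves implicit.
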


\begin{proof}
The claims follow from Chernoff bound.
\end{proof}

\begin{protocol}(Protocol for achieving any 2-private rate $r < C_{2P}$, for the setup in Figure~\ref{fig:ot-setup})
\label{sch:our_2P}

\emph{Protocol parameters} (known to all parties): rate $r$, $\delta$ (suitably low, as per Lemma~\ref{lem:provision_3}), $\tilde{\epsilon}_2$, a binary ($nr, nr(1 - \tilde{\epsilon}_2)$)-code $\Lambda_{nr}$ chosen via a random coding argument

\begin{description}
\item[\textbf{Alice}] Transmits an i.i.d. sequence $X^n$, where $\forall i, \;\; X_i \thicksim \mathbb{U}(\{0,1\})$, over the BEC.

\item[\textbf{Bob}] Receives the $Y^n$ from BEC($\epsilon_1$). Let $r < C_{2P}$. Bob now creates the following sets:
              \begin{IEEEeqnarray*}{rCl}
               G & \thicksim & \mathbb{U} \left ( \left \{A \subset \overline{E}: |A| = nr \right \} \right ), \\
               G_S & \thicksim & \mathbb{U} \left ( \left \{A \subset \overline{E} \backslash G: |A| = \frac{nr(1 - \tilde{\epsilon_2})}{\tilde{\epsilon_2}} \right \}  \right ), \\
               B  & \thicksim & \mathbb{U} \left (\left \{ A \subset E: |A| = nr \right \} \right ), \\
               B_S & \thicksim &  \mathbb{U} \left ( \left \{ A \subset E \backslash B : |A| = |G_S| \right \} \right ).
              \end{IEEEeqnarray*}
      Bob has sufficiently many erased and unerased $Y_i$'s (with high probability) to create these sets, as a consequence of Lemma~\ref{lem:provision_3}.
     Then, depending on the value of $C$, Bob further creates the sets $L_{00}, L_{01}, L_{10}, L_{11}$ as follows.
	\begin{align*}
	C = 0:\qquad &L_{00} = G,\quad L_{01} = G_S \\
                     &L_{10} = B,\quad L_{11} = B_S \\  
        C = 1:\qquad &L_{00} = B,\quad L_{01} = B_S \\
                     &L_{10} = G,\quad L_{11} = G_S 
        \end{align*}
     Bob sends $L_{00},L_{01}, L_{10},L_{11}$ to Alice over the public channel.
\item[\textbf{Alice}] computes the following keys
      \begin{align*}
        T_0 & = X^n |_{L_{00}}, \\
        T_1 & = X^n |_{L_{10}}. 
      \end{align*}
       Alice generates secret key $S_0$ from $X^n|_{L_{01}}$, assuming Bob knows $X^n|_{L_{01}}$.
       Alice also generates secret key $S_1$ from $X^n|_{L_{11}}$, assuming Bob knows $X^n|_{L_{11}}$. $S_0,S_1$ are $nr(1-\tilde{\epsilon}_2)$ bits each.

       Alice  expands the secret keys $S_0$,$S_1$, to get $\tilde{S}_0, \tilde{S}_1$ of $nr$ bits each
      \begin{align*}
       \tilde{S}_0 = \Lambda_{nr}(S_0), \\
       \tilde{S}_1 = \Lambda_{nr}(S_1).
      \end{align*}
      Alice finally sends the following two strings to Bob over the public channel:
      \begin{align*}
        K_0 & \oplus T_0 \oplus \tilde{S}_0, \\ 
        K_1 & \oplus T_1 \oplus \tilde{S}_1.
      \end{align*}

\item[\textbf{Bob}] has the pair ($T_C, \tilde{S}_C$), thus it can get $K_C$.
\end{description}
\end{protocol}

\begin{lemma}
\label{lem:any_r_2p_ach}
Any $r < C_{2P}$ is an achievable 2-private rate..
\end{lemma}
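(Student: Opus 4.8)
The plan is to verify that Protocol~\ref{sch:our_2P} satisfies all four conditions in Definition~\ref{defn:ach2p_rate} with $m = nr$, so that $m/n \to r$ as $n \to \infty$. I would organize the argument around the ``good event'' $\mathcal{G}$ on which Bob has enough erased and unerased locations to form $G, G_S, B, B_S$ (and Eve sees at least $\tilde\epsilon_2$ fraction erasures); by Lemmas~\ref{lem:provision_2} and~\ref{lem:provision_3}, $P[\mathcal{G}^c] \to 0$ exponentially in $n$, so conditioning on $\mathcal{G}$ changes all the mutual-information and error quantities by a vanishing amount (using that all variables involved have entropies growing at most linearly in $n$, so a $2^{-\Omega(n)}$-probability bad event contributes $o(1)$ to any mutual information via the standard $H(\cdot) \le \log|\cdot|$ and Fano-type estimates). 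Everything below is argued on $\mathcal{G}$.

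First, correctness \eqref{eqn:2ach_rate_1}: on $\mathcal{G}$, Bob knows $X^n|_{L_{0C}}$ (this is either $G$ or $B$, and in both cases it is a set Bob can read off $Y^n$ when $C$ selects the ``good'' assignment for index $0$... wait, more precisely $T_C = X^n|_{L_{0C}}$ where for $C=0$, $L_{00}=G$ and for $C=1$, $L_{10}=G$, so in both cases the key $T_C$ is formed from $G$, which Bob knows); similarly $S_C$ is generated from $X^n|_{G_S}$, which Bob knows, and the secret-key-agreement scheme of \cite{MaurerWolf2000} guarantees Bob reconstructs $S_C$ with vanishing error, hence $\tilde S_C = \Lambda_{nr}(S_C)$, hence $K_C$ from the public string $K_C \oplus T_C \oplus \tilde S_C$. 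Next, privacy against Bob \eqref{eqn:2ach_rate_2}: Bob's view contains $K_{\overline C} \oplus T_{\overline C} \oplus \tilde S_{\overline C}$, and $T_{\overline C} = X^n|_B$ (the bad set), of which Bob knows nothing since $B \subset E$; I would show $T_{\overline C}$ is uniform and independent of everything in Bob's view except the public string itself, so it acts as a one-time pad on $K_{\overline C}$. (The sets $L_{\cdot\cdot}$ sent on the public channel reveal only which indices were erased at Bob, not the values there, so they are independent of $X^n|_B$ given $C, Y^n$.) Then privacy against Alice \eqref{eqn:2ach_rate_3}: Alice sees only the tuple $(L_{00}, L_{01}, L_{10}, L_{11})$ and must not learn $C$; since $G, G_S$ are chosen uniformly from unerased-at-Bob indices and $B, B_S$ uniformly from erased-at-Bob indices, and Alice has no information about which of her transmitted indices were erased at Bob, the pair $((G,G_S),(B,B_S))$ has a distribution symmetric under swapping the two components — so $(L_{00},L_{01})$ and $(L_{10},L_{11})$ are exchangeable and $C$ is independent of Alice's view. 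Finally, secrecy against Eve \eqref{eqn:2ach_rate_4}: Eve's view is $(F^k, Z^n)$; the public messages reveal $K_i \oplus T_i \oplus \tilde S_i$ for $i=0,1$ plus the index sets. The key point, supplied by \cite{MaurerWolf2000}, is that $S_0$ and $S_1$ are each nearly independent of Eve's view ($I(S_i; W_k) \to 0$); expanding through the rate-$(1-\tilde\epsilon_2)$ code $\Lambda_{nr}$ and invoking the associated random-coding/privacy-amplification argument, $\tilde S_i$ is nearly uniform given Eve's view, so $\tilde S_i$ one-time-pads $K_i$ even though Eve knows $\approx (1-\epsilon_2)$ fraction of $T_i$ — this is exactly why $S_i$ is padded to full length $nr$. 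I would then combine the two-string case via a hybrid/union argument to conclude $I(K_0,K_1,C; W_k) \to 0$.

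The main obstacle is the last step — making \eqref{eqn:2ach_rate_4} rigorous. One must carefully check that the secret keys $S_0, S_1$ generated on the disjoint index blocks $G_S \cup B_S$ are jointly (not just individually) almost-independent of Eve's view, and that after expansion by $\Lambda_{nr}$ the strings $\tilde S_0, \tilde S_1$ remain jointly close to uniform conditioned on $(Z^n, \text{index sets})$; the index sets $L_{01}, L_{11}$ do leak the erasure pattern at Bob, which is correlated with Eve's erasure pattern, so one needs the secret-key-agreement guarantee to already account for public discussion of which indices are used. I would handle this by appealing to the strong secrecy guarantee of \cite{MaurerWolf2000} (which holds against an Eve who sees the public reconciliation messages), treating the choice of $G_S, B_S$ as part of that public discussion, and then using a standard lemma that an almost-uniform secret key XORed into a message makes the message almost independent of the adversary's view — applied to the length-$nr$ expanded keys via the code's rate matching $1 - \tilde\epsilon_2 < 1-\epsilon_2$... actually matching the ``effective'' fraction of $T_i$ unknown to Eve, which is $\ge \epsilon_2 > \tilde\epsilon_2$ on $\mathcal{G}$. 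The remaining conditions are comparatively routine one-time-pad and symmetry arguments.
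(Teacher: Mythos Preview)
Your treatment of \eqref{eqn:2ach_rate_2} has a genuine gap. In the 2-private setting this condition reads $I(K_{\overline C};V_k,W_k)\to 0$: Bob may \emph{collude with Eve}. Your argument (``$T_{\overline C}=X^n|_B$, of which Bob knows nothing since $B\subset E$, so it one-time-pads $K_{\overline C}$'') only addresses Bob's view $V_k$. Once Eve's view $W_k$ is added, roughly a $1-\epsilon_2$ fraction of the bits of $T_{\overline C}$ are visible (Eve's erasures are independent of Bob's), so $T_{\overline C}$ alone cannot serve as the pad. This is precisely why the scheme layers $\tilde S_{\overline C}$ on top of $T_{\overline C}$: the expanded secret key is there to cover the coordinates of $T_{\overline C}$ that Eve sees. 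In the paper, part~(2) is accordingly the longest step: one splits the $nr$ positions into the set $I$ where Eve sees $T_{\overline C}$ unerased and its complement $\overline I$, uses that $T_{\overline C,\overline I}$ is uniform given everything else, and then must show that $\tilde S_{\overline C,I}$ has entropy essentially $|I|$ given Eve's side information. Part~(4) (secrecy against Eve alone) is then obtained by reducing to this computation.

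A related misconception appears in your sketch for \eqref{eqn:2ach_rate_4}: you write that ``$\tilde S_i$ is nearly uniform given Eve's view,'' but this cannot hold, since $\tilde S_i=\Lambda_{nr}(S_i)$ has entropy at most $nr(1-\tilde\epsilon_2)<nr$. What the code $\Lambda_{nr}$ is engineered to guarantee (Lemma~\ref{lem:random_coding_lemma}) is that the \emph{restriction} $\tilde S_i|_J$ is nearly uniform simultaneously for every subset $J$ with $|J|<nr(1-\tilde\epsilon_2)$. Applied with $J=I$ (the Eve-visible coordinates of $T_i$, of size about $nr(1-\epsilon_2)<nr(1-\tilde\epsilon_2)$ on the good event), this is exactly what makes $\tilde S_{\overline C,I}$ act as a pad on the leaked portion of $T_{\overline C}$. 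The Maurer--Wolf strong secrecy of $S_i$ by itself does not give this; you need the spreading property of the random code, and your proposal never invokes it.

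The remaining pieces are close to the paper's argument. For \eqref{eqn:2ach_rate_3} note that the condition is again against a colluding pair, here $I(C;U_k,W_k)$; the paper first uses the Markov chain $C-U_k-W_k$ to reduce to $I(C;U_k)$ before the symmetry argument you describe.
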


\begin{proof}

A sequence of instances $\{P_n\}_{n \in \mathbb{N}}$ of Protocol~\ref{sch:our_2P} will be used. Let $J := \mathbbm{1}_{\{|E| \geq nr/\tilde{\epsilon_2}\}\cap \{|\overline{E}| \geq nr/\tilde{\epsilon_2}\}}$ be the random variable to indicate the event that Bob has seen enough erasures and non-erasures. By Lemma~\ref{lem:provision_3}, $Pr [J=1] \longrightarrow 1$ exponentially fast. 

For ease of notation, we denote:
\begin{align*}
\underline{G} & = (G,B,G_S,B_S) \\
\underline{L} & = (L_{00}, L_{01}, L_{10}, L_{11}) \\
\tilde{K}_C & =  K_C \oplus T_C \oplus \tilde{S}_C \\
\tilde{K}_{\overline{C}} & =  K_{\overline{C}} \oplus T_{\overline{C}} \oplus \tilde{S}_{\overline{C}}
\end{align*}

We note that for our achievable scheme,
\begin{align*}
U_k & = (K_0,K_1,X^n, \underline{L}, \tilde{K}_0, \tilde{K}_1) \\
V_k & = (C,Y^n,\underline{G}, \tilde{K}_0, \tilde{K}_1) \\ 
W_k & = (Z^n,\underline{L}, \tilde{K}_0, \tilde{K}_1)
\end{align*}

\begin{enumerate}

\item To show that (\ref{eqn:2ach_rate_1}) is satisfied for $\{P_n\}_{n \in \mathbb{N}}$, we note that
\begin{align*}
P[\hat{K}_C \neq K_C] & = P[J=0]P[\hat{K}_C \neq K_C | J = 0] \\ & + P[J=1]P[\hat{K}_C \neq K_C | J = 1]
\end{align*} 

Since $Pr[J=0] \rightarrow 0$ exponentially fast, it is sufficient to show that $P[\hat{K}_C \neq K_C | J = 1] \longrightarrow 0$ as $n \longrightarrow \infty$.

Now, when Bob does have sufficient erasures and non-erasures, Bob knows $T_C$ since $T_C = X^n|_G$. Similarly, Bob knows $\tilde{S}_C$ since $\tilde{S}_C$ is a function of $S_C$ which, in turn, is a function of $X^n|_{G_S}$. Thus, when Bob receives $\tilde{K}_C$ from Alice, Bob learns $K_C$ with zero error. Hence, $P[\hat{K}_C \neq K_C | J = 1] = 0$.

\item  \label{item:proof_1} To show that (\ref{eqn:2ach_rate_2}) is satisfies for $\{P_n\}_{n \in \mathbb{N}}$, we note that
\begin{align*}
I(K_{\overline{C}} ; V_k,W_k ) & \leq I(K_{\overline{C}} ; V_k,W_k, J) \\
& = \sum_{j=0,1}Pr[J=j] \,I(K_{\overline{C}} ; V_k,W_k| J=j) \\
& \hspace*{15mm}+ I(K_{\overline{C}} ; J) .
\end{align*}
Since $Pr[J=0] \rightarrow 0$ exponentially fast and $I(K_{\overline{C}} ; J)=0$, it is sufficient
to show that $I(K_{\overline{C}}  ;  V_k,W_k | J = 1) \longrightarrow 0$ as $n \longrightarrow \infty$. Now,

\begin{IEEEeqnarray*}{rCl}
I&(&K_{\overline{C}}  ;  V_k,W_k | J = 1) \\
&& = I(K_{\overline{C}} ; C,Y^n, Z^n, \underline{G}, \underline{L}, \tilde{K}_0, \tilde{K}_1 | J = 1) \\
&& = I(K_{\overline{C}} ; C,Y^n, Z^n, \underline{G}, \tilde{K}_C, \tilde{K}_{\overline{C}} | J = 1) \\
&& = I(K_{\overline{C}} ; \tilde{K}_{\overline{C}} | C,Y^n, Z^n, \underline{G}, \tilde{K}_C , J = 1) \\
&& \text{  [since $K_{\overline{C}}$ is indep. of ($C,Y^n, Z^n, \underline{G}, \tilde{K}_C, J$)]} \\
&& = H(\tilde{K}_{\overline{C}} | C,Y^n, Z^n, \underline{G}, \tilde{K}_C, J = 1) \\ && - H(\tilde{K}_{\overline{C}} | K_{\overline{C}}, C,Y^n, Z^n, \underline{G}, \tilde{K}_C, J = 1) \\
&& = nr - H(\tilde{K}_{\overline{C}} | K_{\overline{C}}, C,Y^n, Z^n, \underline{G}, \tilde{K}_C, J = 1) \\
&& \text{  [since $K_{\overline{C}}$ is indep. of ($C,Y^n, Z^n, \underline{G}, \tilde{K}_C, J$) and } \\ && \text{ is uniform over its alphabet ]} \\
&& = nr - H(T_{\overline{C}} \oplus \tilde{S}_{\overline{C}} | K_{\overline{C}}, C,Y^n, Z^n, \underline{G}, \tilde{K}_C, J = 1) \\
&& = nr - H(T_{\overline{C}} \oplus \tilde{S}_{\overline{C}} | C, Y^n, Z^n, \underline{G}, \tilde{K}_C, J = 1) \\
&& \text{  [since $K_{\overline{C}}$ is indep. of ($T_{\overline{C}}, \tilde{S}_{\overline{C}}, C, Y^n, Z^n, \underline{G}, \tilde{K}_C, J$) ]} \\
&& = nr - H(T_{\overline{C}} \oplus \tilde{S}_{\overline{C}} | C,Z^n, \underline{G}, J = 1) \\
&& \text{  [as $T_{\overline{C}}, \tilde{S}_{\overline{C}} - C,Z^n,\underline{G} - T_C, \tilde{S}_C, K_C, Y^n$ is a} \\&& \text{ Markov chain conditioned on  J = 1]} \\
&& = nr - H(T_{\overline{C}} \oplus \tilde{S}_{\overline{C}} | C,Z^n, \underline{G}, I, J = 1) \\
&& \text{  [$I :=  \{i \in \{1,2,\ldots,nr\} : T_{\overline{C},i}$ seen unerased in $Z^n\}$,} \\ && \text{  $I$ is a function of ($Z^n,B$)]} \\
&& = nr - H(\tilde{S}_{\overline{C},I} | C,Z^n, \underline{G}, I, J = 1) \\ && - H(T_{\overline{C},\overline{I}} \oplus \tilde{S}_{\overline{C}, \overline{I}} | C,Z^n, \underline{G}, I, \tilde{S}_{\overline{C},I}, J = 1)\\
&& \text{  [$\overline{I} = \{1,2,\ldots,nr\} \backslash I$]} \\
&& = nr - H(\tilde{S}_{\overline{C},I} | C,Z^n, \underline{G}, I, J = 1) - nr\epsilon_2 \\
&& \text{  [since $T_{\overline{C},\overline{I}} - I,B - C,Z^n, \underline{G} \backslash B, \tilde{S}_{\overline{C},I}$ is a Markov chain} \\ && \text{ conditioned on J = 1,  and $T_{\overline{C},\overline{I}}$ is uniform over its} \\ && \text{  alphabet]} \\
&& = nr(1 - \epsilon_2) - H(\tilde{S}_{\overline{C},I} | C,I, J = 1) \\ && + I(\tilde{S}_{\overline{C},I} ; Z^n, \underline{G} | C,I, J = 1) \\
&& = nr(1 - \epsilon_2) - H(\tilde{S}_{\overline{C},I} | I, J = 1) \\ && + I(\tilde{S}_{\overline{C},I} ; Z^n, \underline{G} | C,I, J = 1) \\
&& \text{  [since $X^n|_{B_S} - I - C$ is a Markov chain conditioned on} \\ && \text{ $J = 1$ and $\tilde{S}_{\overline{C}}$  is a function of $X^n|_{B_S}$] } \\
&& \leq nr(1 - \epsilon_2) - \sum_{|i| < nr(1-\tilde{\epsilon}_2 - \frac{\epsilon_2 \delta}{2})} p_{I}(i) H(\tilde{S}_{\overline{C},I} | I = i, J = 1)  \\ && + I(\tilde{S}_{\overline{C},I} ; Z^n, \underline{G} | C,I, J = 1) \\
&& \leq nr(1 - \epsilon_2) - \sum_{|i| < nr(1-\tilde{\epsilon}_2 - \frac{\epsilon_2 \delta}{2})} p_{I}(i) \cdot (|i| - 2^{-nr\frac{\epsilon_2 \delta}{4}}) \\ && + I(\tilde{S}_{\overline{C},I} ; Z^n, \underline{G} | C,I, J = 1) \\
&& \text{[using Lemma~\ref{lem:random_coding_lemma}, since $\frac{\epsilon_2 \delta}{4} < 1-\tilde{\epsilon}_2 - \frac{|i|}{nr}$]} \\
&& = nr(1 - \epsilon_2) + \sum_{|i| < nr(1-\tilde{\epsilon}_2 - \frac{\epsilon_2 \delta}{2})} p_{I}(i) 2^{-nr\frac{\epsilon_2 \delta}{4}} - \mathbb{E}(|I|) \\ && + \sum_{|i| \geq nr(1-\tilde{\epsilon}_2 - \frac{\epsilon_2 \delta}{2})}  p_{I}(i) |i| + I(\tilde{S}_{\overline{C},I} ; Z^n, \underline{G} | C,I, J = 1) \\
&& \leq  nr(1 - \epsilon_2) +  \overline{p}_n 2^{-nr\frac{\epsilon_2 \delta}{4}} - nr(1 - \epsilon_2) \\ && + \sum_{|i| \geq nr(1-\tilde{\epsilon}_2 - \frac{\epsilon_2 \delta}{2})} p_{I}(i) \cdot nr + I(\tilde{S}_{\overline{C},I} ; Z^n, \underline{G} | C,I, J = 1) \\
&& \text{  [$p_n =  Pr(|I| \geq nr(1-\tilde{\epsilon}_2 - \frac{\epsilon_2 \delta}{2}))$,  $\overline{p}_n = 1 - p_n$,} \\ && \text{ $p_n \longrightarrow 0$ exp. as $n \longrightarrow \infty$ by Lemma~\ref{lem:provision_2}]} \\
&& = \overline{p}_n 2^{-nr\frac{\epsilon_2 \delta}{4}} + p_n \cdot nr + I(\tilde{S}_{\overline{C},I} ; Z^n, \underline{G} | C,I, J = 1) \\
&& \leq \overline{p}_n 2^{-nr\frac{\epsilon_2 \delta}{4}}  + p_n \cdot nr  +  I(\tilde{S}_{\overline{C}} ; Z^n, \underline{G} | C,I, J = 1) \\
&& = \overline{p}_n 2^{-nr\frac{\epsilon_2 \delta}{4}} + p_n \cdot nr  +  I(\tilde{S}_{\overline{C}} ; Z^n, \underline{G},I | C, J = 1) \\
&& \text{  [since $I$ is indep. of ($\tilde{S}_{\overline{C}}, C$), conditioned on $J=1$]} \\
&& = \overline{p}_n 2^{-nr\frac{\epsilon_2 \delta}{4}}  + p_n \cdot nr  +  I(\tilde{S}_{\overline{C}} ; Z^n, \underline{G} | C, J = 1) \\
&& \text{  [since $I$ is a function of ($Z^n, B$)]} \\
&& \leq \overline{p}_n 2^{-nr\frac{\epsilon_2 \delta}{4}}  +  p_n \cdot nr  + I(\tilde{S}_{\overline{C}} ; Z^n, \underline{G}, C | J = 1) \\
&& = \overline{p}_n 2^{-nr\frac{\epsilon_2 \delta}{4}}  + p_n \cdot nr  +  I(\tilde{S}_{\overline{C}} ; Z^n, B_S | J = 1) \\
&& \text{  [since $\tilde{S}_{\overline{C}} - Z^n,B_S - C,G,B,G_S$ is a Markov chain,} \\ && \text{ conditioned on $J=1$]} \\
&& \leq \overline{p}_n 2^{-nr\frac{\epsilon_2 \delta}{4}}  +  p_n \cdot nr  +  I(\tilde{S}_{\overline{C}}, S_{\overline{C}} ; Z^n, B_S | J = 1) \\
&& = \overline{p}_n 2^{-nr\frac{\epsilon_2 \delta}{4}}  +  p_n \cdot nr  +  I(S_{\overline{C}} ; Z^n, B_S | J = 1) \\
&& \text{  [because $\tilde{S}_{\overline{C}}$ is a function of $S_{\overline{C}}$]}
\end{IEEEeqnarray*}

The R.H.S. goes to zero, as $n \longrightarrow \infty$, since $p_n \longrightarrow 0$ exponentially fast with $n$ and since the information which Eve receives about $S_{\overline{C}}$ can be made arbitrarily small for sufficiently large $n$(\cite{MaurerWolf2000}).

\item \label{item:proof_2} To show that eqn. (\ref{eqn:2ach_rate_3}) is satisfied for $\{P_n\}_{n \in \mathbb{N}}$, we note that
\begin{align*}
I(C ; U_k,W_k ) & =  I(C ; U_k) \\ & \text{  [since $C - U_k - W_k$ is a Markov chain]}\\ 
& \leq I(C ; U_k,J) \\
& = \sum_{j=0,1}Pr[J=j] \,I(C ; U_k| J=j) \\
& \hspace*{15mm}+ I(C ; J) .
\end{align*}
Since $Pr[J=0] \rightarrow 0$ exponentially fast, $I(C ; U_k| J=0) \leq 1$ and $I(C ; J)=0$, it is sufficient
to show that $I(C  ;  U_k | J = 1) \longrightarrow 0$ as $n \longrightarrow \infty$

Now,

\begin{IEEEeqnarray*}{rCl}
I(C &;& U_k| J = 1) \\ 
               & = & I(C ; K_0,K_1, X^n, \underline{L}, \tilde{K}_0, \tilde{K}_1 | J = 1) \\
               & = & I(C ; \underline{L}, \tilde{K}_0, \tilde{K}_1 | K_0,K_1, X^n, J = 1) \\
               && \text{ [since $C$ is indep. of ($K_0,K_1,X^n, J$)]} \\
               & = & I(C ; \underline{L}| K_0,K_1, X^n, J = 1) \\
               && \text{ [since $T_0,T_1,\tilde{S}_0, \tilde{S}_1$ are functions of ($X^n,\underline{L}$)]} \\
               & = & I(C ; \underline{L}| X^n, J = 1) \\
               && \text{ [since ($K_0,K_1$) is indep. of ($C,X^n,\underline{L}, J$)]} \\
               & = & H(\underline{L}| X^n, J = 1) - H(\underline{L} | X^n,C, J = 1) \\
               & = & H(\underline{L} | J = 1) - H(\underline{G} | C, J = 1) \\
               && \text{ [since $X^n$ is indep. of ($\underline{L}, J, C$)]} \\
               & = & H(\underline{L} | J = 1) - H(\underline{G} | J = 1) \\
               && \text{ [since $C$ is indep. of ($\underline{G}, J$)]} \\
               & = & 0
\end{IEEEeqnarray*}

\item To show that eqn. (\ref{eqn:2ach_rate_4}) is satisfied for $\{P_n\}_{n \in \mathbb{N}}$, we will use the proofs already seen for Part (\ref{item:proof_1}) and Part (\ref{item:proof_2}), as follows:

\begin{IEEEeqnarray*}{rCl}
I&(& K_0,K_1,C ; W_k) \\ 
&& = I(K_0,K_1 ; W_k | C) + I(C ; W_k) \\
&& = I(K_C, K_{\overline{C}} ; Z^n, \underline{L}, \tilde{K}_C, \tilde{K}_{\overline{C}} | C) + I(C ; W_k) \\
&& = I(K_C, K_{\overline{C}} ;  \tilde{K}_C, \tilde{K}_{\overline{C}} | C, Z^n, \underline{L})+ I(C ; W_k) \\
&& \text{ [since ($K_C, K_{\overline{C}}$) is indep. of ($Z^n, \underline{L}, C$)]} \\
&& = I(K_C ; \tilde{K}_C, \tilde{K}_{\overline{C}} | C, Z^n, \underline{L}) \\ && + I(K_{\overline{C}} ; \tilde{K}_C, \tilde{K}_{\overline{C}} | C, Z^n, \underline{L}, K_C) +   I(C ; W_k) \\
&& = I(K_C ; \tilde{K}_C | C, Z^n, \underline{L}) + I(K_C ; \tilde{K}_{\overline{C}} | C, Z^n, \underline{L}, \tilde{K}_C) \\ && + I(K_{\overline{C}} ; \tilde{K}_{\overline{C}} | C, Z^n, \underline{L}, K_C) \\ && + I(K_{\overline{C}} ; \tilde{K}_C | C, Z^n, \underline{L}, K_C, \tilde{K}_{\overline{C}})  + I(C ; W_k) \\
&& = I(K_C ; \tilde{K}_C | C, Z^n, \underline{L})  + I(K_{\overline{C}} ; \tilde{K}_{\overline{C}} | C, Z^n, \underline{L}, K_C) \\ && + I(K_{\overline{C}} ; T_C \oplus \tilde{S}_C | C, Z^n, \underline{L}, K_C, \tilde{K}_{\overline{C}}) + I(C ; W_k) \\
&& \text{ [since $K_{\overline{C}}$ is indep. of ($C,Z^n,\underline{L}, \tilde{K}_C, K_C$)]} \\
&& = I(K_C ; \tilde{K}_C | C, Z^n, \underline{L})  + I(K_{\overline{C}} ; \tilde{K}_{\overline{C}} | C, Z^n, \underline{L}, K_C) \\ && + I(C ; W_k) \\
&& \text{ [since $T_C,\tilde{S}_C - C,Z^n,\underline{L} - K_C, K_{\overline{C}}, T_{\overline{C}}, \tilde{S}_{\overline{C}}$ } \\ && \text{ is  a Markov chain]} \\
&& = I(K_C ; \tilde{K}_C | C, Z^n, \underline{L})  + I(K_{\overline{C}} ; \tilde{K}_{\overline{C}} | C, Z^n, \underline{L}) \\ && + I(C ; W_k) \\
&& \text{ [since $K_C$ is indep. of ($C, Z^n, \underline{L}, K_{\overline{C}}, \tilde{K}_{\overline{C}}$)]} \\
\end{IEEEeqnarray*}

The first two terms above go to zero following the exact same arguments as in the proof of Part (\ref{item:proof_1}). The last term above goes to zero as a consequence of the proof of Part (\ref{item:proof_2}).

\end{enumerate}

\end{proof}

\subsection{Protocol for achieving any 1-private rate $r < C_{1P}$}
\label{sec:ach_protocol_1P}
The difference in this protocol, compared to Protocol \ref{sch:our_2P}, is in the way the set $B_S$ is chosen.
\begin{itemize}
 \item $\epsilon_1 < \frac{\epsilon_2}{2}$: Bob chooses $B_S$ randomly out of leftover unerased indices and, thus, fully knows the corresponding secret key.
 \item $\frac{\epsilon_2}{2} \leq \epsilon_1 < \frac{1}{2}$: Bob chooses $B_S$ randomly out of all indices left after creating $G$,$B$ and $G_S$ and may know the corresponding secret key partially.
 \item $\epsilon_1 \geq \frac{1}{2}$: Bob chooses $B_S$ randomly out of leftover erased indices and knows nothing about the corresponding secret key.
\end{itemize}
Just as in Lemma~\ref{lem:provision_3}, we can show that for a suitably low $\delta$ and sufficiently large $n$, Bob will have enough erased and unerased $Y_i$'s with which to run the protocol and achieve any rate $r < C_{1P}$. 
\begin{lemma}
\label{lem:any_r_1p_ach}
Any $r < C_{1P}$ is an achievable 1-private rate.
\end{lemma}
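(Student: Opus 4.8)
The plan is to follow the proof of Lemma~\ref{lem:any_r_2p_ach} almost line for line, since the $1$-private protocol differs from Protocol~\ref{sch:our_2P} only in how $B_S$ is drawn, and the secrecy requirements (\ref{eqn:1ach_rate_2}) and (\ref{eqn:1ach_rate_3}) are just the restrictions of (\ref{eqn:2ach_rate_2}) and (\ref{eqn:2ach_rate_3}) to $V_k$ and to $U_k$ alone (no collusion with Eve). First I would record a provisioning statement in the style of Lemma~\ref{lem:provision_3}: for $r<C_{1P}$ and $\delta$ small enough, with probability tending to $1$ exponentially in $n$ Bob has enough erased and unerased $Y_i$'s to form $G,B,G_S,B_S$ in whichever of the three regimes applies. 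The only thing to verify is the index budget, and this is where the three branches of $C_{1P}$ come from: the demands on non-erased positions, on erased positions, and (in the middle regime) the requirement that the leftover pool out of which $B_S$ is drawn have size at least $|B_S|$, work out so that the binding constraint is exactly $r<$ the corresponding branch of $C_{1P}$ --- for $\epsilon_1<\epsilon_2/2$ it is the number of erasures and gives $r<\epsilon_1$, for $\epsilon_2/2\le\epsilon_1<1/2$ it is the leftover pool for $B_S$ and gives $r<\epsilon_2/2$, and for $\epsilon_1\ge 1/2$ it is the number of non-erasures and gives $r<\epsilon_2(1-\epsilon_1)$. Exactly as in the proof of Lemma~\ref{lem:any_r_2p_ach}, conditioning on the indicator $J$ of this event reduces all four conditions to their conditional versions given $J=1$.

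Given $J=1$: condition (\ref{eqn:1ach_rate_1}) holds verbatim because Bob still knows $T_C=X^n|_G$ and $\tilde{S}_C=\Lambda_{nr}(S_C)$ with $S_C$ a function of $X^n|_{G_S}$. Condition (\ref{eqn:1ach_rate_3}) is exactly the computation already done for (\ref{eqn:2ach_rate_3}): Alice's view reduces to $\underline{L}$ given $X^n$, the erasure pattern is independent of $X^n$, and $\underline{L}$ is independent of $C$ because $|G|=|B|$, $|G_S|=|B_S|$ and each set is drawn uniformly --- a symmetry that is insensitive to whether $B_S$ is taken from erased, non-erased, or all leftover indices. Condition (\ref{eqn:1ach_rate_2}) is in fact simpler than (\ref{eqn:2ach_rate_2}): since only $V_k$ appears, and $T_{\overline{C}}=X^n|_B$ is $nr$ i.i.d.\ uniform bits on an index set disjoint from $G,G_S,B_S$ with $Y^n|_B$ carrying no information about $X^n|_B$, the string $\tilde{K}_{\overline{C}}=K_{\overline{C}}\oplus T_{\overline{C}}\oplus\tilde{S}_{\overline{C}}$ is a one-time pad on $K_{\overline{C}}$, so $I(K_{\overline{C}};V_k\mid J=1)=0$ no matter how much of $\tilde{S}_{\overline{C}}$ Bob happens to know.

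The only genuinely new work is condition (\ref{eqn:1ach_rate_4}). Here I would reuse without change the decomposition from the proof of (\ref{eqn:2ach_rate_4}) that reduces $I(K_0,K_1,C;W_k)$ to $I(K_C;\tilde{K}_C\mid C,Z^n,\underline{L})+I(K_{\overline{C}};\tilde{K}_{\overline{C}}\mid C,Z^n,\underline{L})+I(C;W_k)$, then the chain of steps in the proof of (\ref{eqn:2ach_rate_2}) that bounds each of the first two terms by (vanishing terms) $+\,I(S_{\overline{C}};Z^n,B_S\mid J=1)$ (respectively, with $S_C,G_S$), and the vanishing of $I(C;W_k)$ from the proof of (\ref{eqn:2ach_rate_3}). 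What remains is to show $I(S_{\overline{C}};Z^n,B_S\mid J=1)\to 0$ even when $B_S$ is no longer a subset of Bob's erased positions. The key point is that Eve's erasures are independent of Bob's, so conditioned on the set $B_S$ the number of its positions erased at Eve concentrates around $\epsilon_2|B_S|$; a Chernoff bound gives the exact analogue of Lemma~\ref{lem:provision_2}, namely $P[\,|E'\cap B_S|\ge\tilde{\epsilon}_2|B_S|\,]\to 1$ exponentially, and on that event Alice --- who knows $X^n|_{B_S}$ in full --- can still extract a key $S_{\overline{C}}$ of $\approx nr(1-\tilde{\epsilon}_2)$ bits with $I(S_{\overline{C}};W_k)\to 0$ by the secret-key-agreement / privacy-amplification argument of \cite{MaurerWolf2000}; the same applies to $S_C$ from $X^n|_{G_S}$. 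I expect this verification --- that the key drawn from $X^n|_{B_S}$ stays secret from Eve in the $\epsilon_1<1/2$ regimes --- to be the main obstacle, but it is only a concentration argument layered on \cite{MaurerWolf2000}; everything else is a transcription of the $2$-private proof with the $W_k$-collusion terms deleted.
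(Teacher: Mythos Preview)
Your proposal is correct and is precisely the fleshing-out that the paper's one-line proof (``similar to the proof of Lemma~\ref{lem:any_r_2p_ach}'') invites: you follow the 2-private argument, drop the $W_k$-collusion in (\ref{eqn:1ach_rate_2}) to get the cleaner one-time-pad step, and correctly isolate the only new ingredient, namely that $I(S_{\overline{C}};Z^n,B_S)\to 0$ still holds when $B_S$ is drawn from unerased or mixed positions because Eve's erasures are independent of Bob's. Your provisioning case analysis for the three regimes is also right and recovers exactly the three branches of $C_{1P}$.
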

The proof of this lemma is similar to the proof of Lemma~\ref{lem:any_r_2p_ach}.

\section{Proof of Converse}
\label{sec:converse}
The converses hold even under a weaker sense of security where conditions
\eqref{eqn:2ach_rate_2}, \eqref{eqn:2ach_rate_4}, \eqref{eqn:1ach_rate_2}, and
\eqref{eqn:1ach_rate_4} hold only with a $\frac{1}{n}$ factor on the
left-hand-side.

\begin{lemma}
For the setup of Figure~\ref{fig:ot-setup},
\begin{equation*}
 C_{2P} \leq \epsilon_2 \cdot \min \{\epsilon_1, 1 - \epsilon_1 \}
\end{equation*}
\end{lemma}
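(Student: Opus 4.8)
The target equals $\min\{\epsilon_2(1-\epsilon_1),\,\epsilon_2\epsilon_1\}$, so the plan is to establish the two bounds $C_{2P}\le\epsilon_2(1-\epsilon_1)$ and $C_{2P}\le\epsilon_2\epsilon_1$ separately. Fix a sequence of $(m,n,k)$ protocols meeting the (weakened) conditions \eqref{eqn:2ach_rate_1}--\eqref{eqn:2ach_rate_4} with $m/n\to R_{2P}$; it suffices to prove single-shot inequalities $m\le n\epsilon_2(1-\epsilon_1)+o(n)$ and $m\le n\epsilon_2\epsilon_1+o(n)$ and let $n\to\infty$. Throughout I write $S:=\{i\in\{i_1,\dots,i_n\}:Y_i\text{ is an erasure}\}$ for Bob's erasure set and $S':=\{i:Z_i\text{ is an erasure}\}$ for Eve's: the indicators $\mathbbm{1}[i\in S]$ and $\mathbbm{1}[i\in S']$ are i.i.d.\ Bernoulli$(\epsilon_1)$ and Bernoulli$(\epsilon_2)$, mutually independent and independent of everything held by Alice and Bob at the start, and $S$ (resp.\ $S'$) is a deterministic function of $Y^n$ (resp.\ $Z^n$). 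By linearity of expectation, $\mathbb{E}|\overline{S}\cap S'|=n\epsilon_2(1-\epsilon_1)$ and $\mathbb{E}|S\cap S'|=n\epsilon_2\epsilon_1$, where $\overline{S}$ is the complement within $\{i_1,\dots,i_n\}$; these two coordinate sets are what will cap $m$.

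The governing intuition is the one already visible in the achievability scheme: Bob can decrypt exactly one of Alice's two public strings, so the ``key material'' protecting the chosen string must be known to Bob (it sits on coordinates unerased for Bob, a $(1-\epsilon_1)$-fraction) while the key material protecting the other string must be unknown to Bob (coordinates erased for Bob, an $\epsilon_1$-fraction); and in either case, since Eve learns nothing about the strings, that material must additionally be hidden from Eve (coordinates erased for Eve, an $\epsilon_2$-fraction). For the first bound I would start from correctness: by Fano and \eqref{eqn:2ach_rate_1}, $H(K_C\mid V_k)=o(n)$, hence $m=H(K_C\mid C)\le I(K_C;V_k\mid C)+o(n)$. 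I would then enlarge the observation to include $Z^n$ and peel off $I(K_C;W_k\mid C)$, which vanishes by \eqref{eqn:2ach_rate_4}; what survives is the information Bob's channel output carries about $K_C$ beyond Eve's view, and — conditioned on the two erasure patterns — that is confined to the symbols of $X^n$ on $\overline{S}\cap S'$ (together with Bob's private coins $N$). Bounding this by $|\overline{S}\cap S'|$ and averaging over $S,S'$ yields $m\le n\epsilon_2(1-\epsilon_1)+o(n)$. For the second bound I would instead start from $K_{\overline C}$-privacy: \eqref{eqn:2ach_rate_2} gives $H(K_{\overline C}\mid V_k,W_k,C)\ge m-o(n)$ (using that $K_{\overline C}$ is uniform and independent of $C$), while obliviousness \eqref{eqn:2ach_rate_3} — Alice cannot tell $C$ from $\overline C$ — together with \eqref{eqn:2ach_rate_4} should force $K_{\overline C}$'s dependence on $X^n$ to pass only through coordinates erased for Bob and erased for Eve, i.e.\ through $X^n$ on $S\cap S'$; hence $H(K_{\overline C}\mid V_k,W_k,C)\le|S\cap S'|+o(n)$, giving $m\le n\epsilon_2\epsilon_1+o(n)$.

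The main obstacle is making the two ``confined to these coordinates'' steps rigorous for a general adaptive protocol, and that is where I expect the real work to lie. Three features defeat the naive argument. First, the erasure patterns carry $\Theta(n)$ bits and cannot simply be handed to all parties as genie information, since revealing $S$ to Eve (or to Alice) would in general destroy obliviousness or $K_{\overline C}$-privacy — so instead one must show, by chaining the secrecy conditions, that given the relevant view the pattern reveals only $o(n)$ bits about the string in question (for instance $I(K_{\overline C};S\mid C,F^k,Z^n)\le I(K_{\overline C};V_k,W_k\mid C)=I(K_{\overline C};V_k,W_k)\to0$ disposes of one such term). Second, Alice's and Bob's private randomness cannot be conditioned away for free (conditioning on $N$ can only increase $I(K_0,K_1,C;W_k)$), so it must be shown conditionally irrelevant to the string being tracked, or absorbed into an entropy that a secrecy bound controls. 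Third, because Alice's transmitted symbols and both parties' public messages depend on the history, the relevant Markov relations must be invoked step by step rather than postulated; I anticipate the bookkeeping to resemble the conditional-mutual-information chains in the security analysis of Protocol~\ref{sch:our_2P}, and it is likely that proving each of the two bounds will draw on more than one of the four secrecy conditions, not merely the ``obvious'' one. Finally, I note that only the averaged $\tfrac1n$ versions of \eqref{eqn:2ach_rate_2} and \eqref{eqn:2ach_rate_4} are used, matching the stated strengthening of the converse.
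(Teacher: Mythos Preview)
Your outline is viable, but the paper takes a markedly different and much shorter route. Rather than tracking information flow through the protocol at the level of individual BEC erasure sets, the paper proves the general bound
\[
C_{2P}\;\le\;\min\Bigl\{\max_{p_X} I(X;Y\mid Z),\ \max_{p_X} H(X\mid Y,Z)\Bigr\}
\]
for an arbitrary broadcast channel $p_{YZ|X}$, by two black-box reductions. For $C_{2P}\le\max_{p_X} I(X;Y\mid Z)$: run the protocol with $C=0$ fixed; then by \eqref{eqn:2ach_rate_1} Alice and Bob agree on $K_0$ and by \eqref{eqn:2ach_rate_4} it is secret from Eve, so $K_0$ is a secret key and the Ahlswede--Csisz\'ar secret-key capacity bound applies. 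For $C_{2P}\le\max_{p_X} H(X\mid Y,Z)$: observe that any 2-private protocol is in particular a \emph{two-party} OT protocol between Alice and the coalition Bob$+$Eve (correctness is inherited, Alice's ignorance of $C$ follows from \eqref{eqn:2ach_rate_3}, and the coalition's ignorance of $K_{\overline C}$ is exactly \eqref{eqn:2ach_rate_2}); then invoke the two-party OT upper bound $\max_{p_X} H(X\mid\text{output})$ from \cite{ot2007} with output $(Y,Z)$. Evaluating both maxima for independent BECs gives $\epsilon_2(1-\epsilon_1)$ and $\epsilon_2\epsilon_1$.

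Your plan would, in effect, re-derive these two capacity bounds in the special BEC case: your first chain is the secret-key argument unrolled, and your second chain (upper-bounding $H(K_{\overline C}\mid V_k,W_k,C)$ by $|S\cap S'|$ via an ``$(X^n,F^k)$ determines both strings'' step) is precisely the Ahlswede--Csisz\'ar OT bound for the combined receiver, rediscovered. The obstacles you flag are real but are exactly the ones already handled inside those cited results, so the paper sidesteps the step-by-step Markov bookkeeping entirely. What your approach buys is self-containment and an explicit single-letter picture tied to the erasure sets; what the paper's buys is brevity and a converse that holds for any $p_{YZ|X}$, not just independent erasures.
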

\begin{proof}
We first state a general upperbound on $C_{2P}$:
For the setup of Figure~\ref{fig:ot-setup-bcast},
\begin{align}
 C_{2P} \leq \min\left \{ \max_{p_X} I(X ; Y | Z), \max_{p_X} H(X | Y, Z) \right \}. \label{eq:c2p-general-outer}
\end{align}
$C_{2P} \leq \max_{p_X} I(X ; Y | Z)$ follows from the observation that operating the protocol with Bob setting $C=0$ allows Alice and Bob to agree on the secret key $K_0$ which is secret from Eve. Since $\max_{p_X} I(X ; Y | Z)$ is an upperbound on secret key capacity of the broadcast channel $p_{Y,Z|X}$ (with public discussion)~\cite{sec-key1993}, the bound follows.

It is easy to verify that the 2-private protocol can be viewed as a (two-party) OT protocol between the parties Alice and Bob-Eve (combined). Hence, by invoking an outerbound on OT capacity in~\cite{ot2007}, we have  $C_{2P} \leq \max_{p_X} H(X | Y, Z)$.
Evaluating these upper bounds in \eqref{eq:c2p-general-outer} for the specific setup in Figure~\ref{fig:ot-setup},
\begin{align*}
C_{2P} & \leq \max_{p_X} I(X ; Y | Z)  = \epsilon_2  (1 - \epsilon_1),\\
C_{2P} & \leq \max_{p_X} H(X | (Y, Z)) =  \epsilon_2 \epsilon_1.
\end{align*}
\end{proof}

\begin{lemma}
For the setup of Figure~\ref{fig:ot-setup},
\begin{equation*}
C_{1P} \leq   \min \left\{ \epsilon_1, \quad \frac{\epsilon_2}{2}, \quad \epsilon_2 (1 - \epsilon_1)  \right\} 
\end{equation*}
\end{lemma}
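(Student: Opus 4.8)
# Proof Plan for the $C_{1P}$ Upper Bound

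The plan is to establish each of the three bounds $C_{1P}\le\epsilon_1$, $C_{1P}\le\epsilon_2/2$, and $C_{1P}\le\epsilon_2(1-\epsilon_1)$ separately, and then take the minimum. The first and third bounds should follow from essentially the same ideas already used in the $2$-privacy converse, specialized to the weaker $1$-privacy constraints. For $C_{1P}\le\epsilon_2(1-\epsilon_1)$, observe that running the protocol with $C=0$ lets Alice and Bob agree on the key $K_0$, and the secrecy condition \eqref{eqn:1ach_rate_4} forces $K_0$ to be (nearly) secret from Eve; hence $m$ cannot exceed the secret-key capacity of the broadcast channel, which is $\max_{p_X}I(X;Y|Z)=\epsilon_2(1-\epsilon_1)$ for the two-BEC setup. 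For $C_{1P}\le\epsilon_1$, the idea is that since Bob obtains $\hat K_C$ reliably from $(C,N,F^k,Y^n)$ while \eqref{eqn:1ach_rate_2} says he learns almost nothing about $K_{\overline C}$, a Fano-type argument bounds $m=H(K_C)$ by roughly $I(X^n;Y^n|\text{erasure pattern})/n \le$ the fraction of unerased symbols, i.e. $m\lesssim n(1-\epsilon_1)$ worth of "useful" channel uses — wait, this needs care: the relevant bound here is actually the one that gives $\epsilon_1$, coming from the fact that Bob must be ignorant of $K_{\overline C}$, and the only symbols Alice sends that are hidden from Bob are the erased ones, a fraction $\epsilon_1$; this is the standard erasure-channel OT converse of \cite{ot2007}, which already yields $C\le\epsilon_1$ for the point-to-point BEC.

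The genuinely new bound, and the one I expect to be the main obstacle, is $C_{1P}\le\epsilon_2/2$. Here collusion is \emph{not} allowed, so we cannot simply merge Bob and Eve as in the $2$-privacy argument; instead the bound must come from playing Bob's two possible choices against Eve's limited view. The natural approach is a ``two-execution'' or ``forking'' argument: consider two hypothetical runs, one with $C=0$ and one with $C=1$, coupled through the same channel noise realizations where possible. By \eqref{eqn:1ach_rate_3}, Alice's view $U_k$ is (nearly) independent of $C$, so Alice behaves almost identically in both runs; by reliability \eqref{eqn:1ach_rate_1}, Bob recovers $K_0$ in the $C=0$ run and $K_1$ in the $C=1$ run. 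One then argues that from Eve's view $W_k=(F^k,Z^n)$ together with a little side information one could reconstruct \emph{both} $K_0$ and $K_1$, so $W_k$ must carry roughly $2m$ bits about $(K_0,K_1)$; but \eqref{eqn:1ach_rate_4} forces $I(K_0,K_1,C;W_k)\approx 0$, and Eve's channel delivers only about $n(1-\epsilon_2)$ unerased symbols, so $2m\lesssim n(1-\epsilon_2)\cdot(\text{something})$. I would carry this out by conditioning on the erasure patterns, using the independence of BEC$(\epsilon_1)$ and BEC$(\epsilon_2)$ to re-randomize Eve's erasures, and bounding $H(K_0,K_1\mid W_k,\text{auxiliary})$ via Fano applied to two coupled decoders. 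The delicate point is constructing the coupling so that the auxiliary information needed to extract both keys is itself of negligible rate, and handling the fact that the public messages $F^k$ are common to both keys rather than splitting cleanly — this is where the factor $2$ (rather than $1$) in the denominator of $\epsilon_2/2$ must be earned.

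Concretely, the steps in order: (i) reduce to the weaker security notion stated at the head of the section, so all mutual-information terms may be $o(n)$ rather than $o(1)$; (ii) dispatch $C_{1P}\le\epsilon_2(1-\epsilon_1)$ via the secret-key-capacity argument; (iii) dispatch $C_{1P}\le\epsilon_1$ by quoting the BEC OT outer bound of \cite{ot2007} applied with Eve ignored (the $1$-privacy conditions on Alice and Bob alone already imply this); (iv) prove $C_{1P}\le\epsilon_2/2$ by the coupled two-execution argument sketched above, which is the crux; (v) combine. I expect steps (ii), (iii), (v) to be short, and essentially all the work — and all the risk of a subtle gap — to lie in step (iv), specifically in the information-theoretic bookkeeping that converts ``Eve could, with low-rate help, learn both strings'' into the clean inequality $2m\le n\epsilon_2$ asymptotically.
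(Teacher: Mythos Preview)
Your steps (ii) and (iii) coincide with the paper's: the bound $C_{1P}\le\epsilon_2(1-\epsilon_1)$ is obtained from the secret-key capacity $\max_{p_X}I(X;Y|Z)$, and $C_{1P}\le\epsilon_1$ from the OT outer bound $\max_{p_X}H(X|Y)$ of \cite{ot2007}.

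For step (iv), your high-level intuition is right --- both strings must be hidden inside the $\approx n\epsilon_2$ bits Eve does not see, forcing $2m\lesssim n\epsilon_2$ --- but the mechanism you sketch is both more elaborate than necessary and internally confused. You write that $W_k$ ``must carry roughly $2m$ bits about $(K_0,K_1)$'' and then that $2m\lesssim n(1-\epsilon_2)\cdot(\text{something})$; both are backwards (secrecy says $W_k$ carries $\approx 0$ bits, and the bound should read $2m\lesssim n\epsilon_2$). More importantly, you propose to make the ``auxiliary information needed to extract both keys \ldots of negligible rate'': that cannot work, and in fact the auxiliary is precisely the $n\epsilon_2$ erased bits --- its size \emph{is} the bound, not something to be driven to zero.

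The paper avoids any two-execution coupling. It first proves, as a standalone lemma, that $H(K_0,K_1\mid X^n,F^k)=o(n)$: using \eqref{eqn:1ach_rate_3} and a continuity lemma one shows $H(K_C\mid X^n,F^k,C)$ and $H(K_{\overline C}\mid X^n,F^k,C)$ differ by $o(n)$, and Fano plus the conditional independence $I(K_0,K_1,M;C,N,Y^n\mid X^n,F^k)=0$ gives $H(K_C\mid X^n,F^k,C)=o(n)$. Once this lemma is in hand, the counting is immediate:
\[
2m = I(K_0,K_1;X^n,F^k)+o(n)
   \le I(K_0,K_1;Z^n,F^k) + \sum_{e'} p_{E'}(e')\,H\bigl(X^n|_{e'}\bigr) + o(n)
   \le 0 + n\epsilon_2 + o(n),
\]
using \eqref{eqn:1ach_rate_4} for the first term. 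So the ``forking'' and the delicate coupling you anticipate are not needed; the single-execution lemma $H(K_0,K_1\mid X^n,F^k)=o(n)$ replaces them and makes the remaining bookkeeping a two-line entropy split.
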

\begin{proof}
We first show that $C_{1P} \leq   \min \left\{ \epsilon_1, \epsilon_2  (1 - \epsilon_1)  \right\}$ by means of the following more general statement:
For the setup of Figure~\ref{fig:ot-setup-bcast},
\begin{align}
 C_{1P} \leq \left \{ \max_{p_X} I(X ; Y | Z), \max_{p_X} H(X | Y) \right \}. \label{eq:c1p-general-outer}
\end{align}
Proof of $ C_{1P} \leq \max_{p_X} I(X ; Y | Z)$ is identical to the one for 2-private case \eqref{eq:c2p-general-outer}. $C_{1P} \leq \max_{p_X} H(X|Y)$ follows from observing that a 1-private protocol is also a protocol for OT between Alice and Bob over the channel $p_{Y|X}$ for which $\max_{p_X} H(X|Y)$ is an upperbound on OT capacity~\cite{ot2007}.
Evaluating \eqref{eq:c1p-general-outer} for the specific setup in Figure~\ref{fig:ot-setup},
\begin{align*}
C_{1P} & \leq \max_{p_X} H(X | Y)  = \epsilon_1, \\
C_{1P} & \leq \max_{p_X} I(X ; Y | Z) = \epsilon_2(1 - \epsilon_1).
\end{align*}
It only remains to show that $C_{1P} \leq \frac{\epsilon_2}{2}$. For this, we need the following lemma which states that ($X^n,F^k$) must together carry nearly all the information about ($K_0,K_1$).
\begin{lemma} \label{lem:small_quant}
$\frac{1}{n}H(K_0,K_1 | X^n,F^k) \longrightarrow 0 \text{ as } n \longrightarrow \infty$.
\end{lemma}
The proof is deferred to the appendix. The lemma can be interpreted as follows:
Bob's privacy against Alice (eqn.~\ref{eqn:1ach_rate_3}) implies that Alice is unaware of which string is required by Bob. This forces that both the strings be decodable from observing the the interface of Alice to the system (i.e. observing ($X^n,F^k$)). If this were not the case and $K_0$, say, could not be fully decoded from ($X^n,F^k$), then Alice can infer that Bob wanted $K_1$ (i.e., $C=1$) violating the requirement of (\ref{eqn:1ach_rate_3}).

To convert this into an upperbound on the rate, intuitively, Eve has access to all of $(X^n,F^k)$ except those bits of $X^n$ erased by her channel. Since the erased fraction of bits is about $\epsilon_2$, and we require both strings to be secret from Eve, each string has a rate of at most $\epsilon_2/2$. We make this argument more formally below.
Let $I=\{i_1,\ldots,i_n\}$, be the instances where the channel is used and the (random) set of indices at which Eve saw erasures be $E'  := \{ i \in I : Z_i = \text{erasure} \}$. Let $e'$ denote a realization of $E'$ and $\overline{e}'=I\backslash e'$ its complement.
\begin{align*}
2m & =  H(K_0,K_1) \\
   & =  I(K_0,K_1 ; X^n,F^k) + H(K_0,K_1 | X^n,F^k) \\
   & \stackrel{\text{(a)}}{=}  I(K_0,K_1 ; X^n,F^k) + o(n) \\ 
   & \stackrel{\text{(b)}}{=}  I(K_0,K_1 ; X^n,F^k | E') + o(n) \\
   & =  \sum_{e' \subseteq I } p_{E'}(e') I(K_0,K_1 ; X^n,F^k | E' = e') + o(n) \\
   & =  \sum_{e' \subseteq I } p_{E'}(e') I(K_0,K_1 ; X^n|_{\overline{e}'}, F^k| E' = e') + o(n)  \\ &\quad +\sum_{e' \subseteq I } p_{E'}(e') I(K_0,K_1 ; X^n|_{e'}| X^n|_{\overline{e}'}, F^k, E' = e')   \\
   & \leq   \sum_{e' \subseteq I } p_{E'}(e') I(K_0,K_1 ; X^n|_{\overline{e}'}, F^k| E' = e') \\ &\quad + \sum_{e' \subseteq I } p_{E'}(e') H(X^n|_{e'}| E' = e') + o(n) \\
   & =    I(K_0,K_1 ; Z^n, F^k) + n\epsilon_2 + o(n) \\
  & \stackrel{\text{(c)}}{=}   n\epsilon_2 + o(n),
\end{align*}
where (a) follows from Lemma~\ref{lem:small_quant}, (b) from the independence of Eve's channel, and (c) from \eqref{eqn:1ach_rate_4}. Therefore, $C_{1P} \leq \frac{\epsilon_2}{2}$.
\end{proof}

\section*{Acknowledgment}
The work  was supported in part by the Bharti Centre for Communication, IIT Bombay, a grant from the Information Technology Research Academy, Media Lab Asia, to IIT Bombay and TIFR, a grant from the Department of Science and Technology, Government of India, to IIT Bombay, and a Ramanujan Fellowship from the Department of Science and Technology, Government of India, to V. Prabhakaran. S. Diggavi was supported in part by NSF awards 1136174, 1321120 and MURI award AFOSR FA9550-09-064.

\appendices

\section{Proof of Lemma~\ref{lem:small_quant}}
We need two lemmas from \cite{ot2007}, which are stated here for completeness.
\begin{lemma}
\label{lem:lemma3_ahl_csis}
Let U,V,Z denote random variables with values in finite sets $\mathcal{U}$, $\mathcal{V}$ and $\mathcal{Z}$ respectively. Suppose $z_1,z_2 \in \mathcal{Z}$ with $P[Z = z_1] = p > 0$ and $P[Z = z_2] = q > 0$. Then,
\begin{IEEEeqnarray*}{rCl}
 | H(U|V,Z = z_1) & - & H(U|V,Z = z_2) | \\ &&  \leq 3 \sqrt{\frac{(p+q)\ln 2}{2pq}I(UV ; Z)} log|\mathcal{U}| + 1.
\end{IEEEeqnarray*}
\end{lemma}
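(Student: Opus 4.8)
The plan is to route everything through total variation: bound the distance between the two conditional laws $P_{UV\mid Z=z_1}$ and $P_{UV\mid Z=z_2}$ in terms of $I(UV;Z)$, and then apply a Fannes-type continuity estimate for the conditional entropy $H(U\mid V)$. Only the restriction of $Z$ to $\{z_1,z_2\}$ plays any role. Set $\theta:=\sqrt{\tfrac{(p+q)\ln 2}{2pq}\,I(UV;Z)}$; if $\theta\ge 1$ the claim is trivial, since its right-hand side then exceeds $\log|\mathcal U|\ge|H(U|V,Z{=}z_1)-H(U|V,Z{=}z_2)|$, so assume $\theta<1$.

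\emph{Step 1 (a distance bound).} Let $R:=P_{UV}$ be the unconditional law and put $a:=\|P_{UV\mid Z=z_1}-R\|_{\mathrm{TV}}$, $b:=\|P_{UV\mid Z=z_2}-R\|_{\mathrm{TV}}$. Pinsker's inequality in base-$2$ logarithms gives $a^2\le\tfrac{\ln 2}{2}D(P_{UV\mid Z=z_1}\|R)$ and likewise for $b$, while the identity $I(UV;Z)=\sum_z P[Z{=}z]\,D(P_{UV\mid Z=z}\|R)\ge p\,D(P_{UV\mid Z=z_1}\|R)+q\,D(P_{UV\mid Z=z_2}\|R)$ yields $pa^2+qb^2\le\tfrac{\ln 2}{2}I(UV;Z)$. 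By the triangle inequality and Cauchy--Schwarz, $\|P_{UV\mid Z=z_1}-P_{UV\mid Z=z_2}\|_{\mathrm{TV}}\le a+b\le\sqrt{(\tfrac1p+\tfrac1q)(pa^2+qb^2)}\le\theta$, and contraction of total variation under marginalization gives $\|P_{V\mid Z=z_1}-P_{V\mid Z=z_2}\|_{\mathrm{TV}}\le\theta$ as well.

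\emph{Step 2 (continuity in terms of $|\mathcal U|$ only).} The naive route $H(U|V)=H(UV)-H(V)$ followed by Fannes on $H(UV)$ is to be avoided, since it introduces a spurious $\log|\mathcal V|$; instead I would decompose over the values of $V$. Writing $H(U|V,Z{=}z_i)=\sum_v P[V{=}v\mid Z{=}z_i]\,H(U\mid V{=}v,Z{=}z_i)$ and telescoping, the difference splits as $\sum_v P[v|z_1]\bigl(H(U|v,z_1)-H(U|v,z_2)\bigr)+\sum_v\bigl(P[v|z_1]-P[v|z_2]\bigr)H(U|v,z_2)$. For the first sum, the Fannes-type inequality bounds $|H(U|v,z_1)-H(U|v,z_2)|$ by $t_v\log|\mathcal U|+h(t_v)$ with $t_v:=\|P_{U\mid V=v,Z=z_1}-P_{U\mid V=v,Z=z_2}\|_{\mathrm{TV}}$ (the few $v$ with large $t_v$ carry negligible weight and are absorbed by the trivial bound $\log|\mathcal U|$); then concavity of the binary entropy $h$ (Jensen) together with the elementary inequality $\sum_v P[v|z_1]t_v\le\|P_{UV\mid Z=z_1}-P_{UV\mid Z=z_2}\|_{\mathrm{TV}}+\|P_{V\mid Z=z_1}-P_{V\mid Z=z_2}\|_{\mathrm{TV}}\le 2\theta$ (obtained by inserting the cross term $P[v|z_1]P[u|v,z_2]$ inside the total-variation sum) bounds the first sum by $2\theta\log|\mathcal U|+1$. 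For the second sum, recenter each $H(U|v,z_2)$ about $\tfrac12\log|\mathcal U|$ (all values lie in $[0,\log|\mathcal U|]$ and $\sum_v(P[v|z_1]-P[v|z_2])=0$), giving a bound of $\|P_{V\mid Z=z_1}-P_{V\mid Z=z_2}\|_{\mathrm{TV}}\log|\mathcal U|\le\theta\log|\mathcal U|$. Adding the two contributions yields $|H(U|V,z_1)-H(U|V,z_2)|\le 3\theta\log|\mathcal U|+1$, which is exactly the assertion.

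\emph{Main obstacle.} The real work is Step 2, namely getting a continuity estimate whose leading term is $\log|\mathcal U|$ and not $\log|\mathcal V|$ (or $\log|\mathcal U\times\mathcal V|$). This is what forces the per-value decomposition and the accompanying bookkeeping — per-$v$ Fannes estimates, Jensen on $h$, and recentering of the marginal-$V$ term — in order to land the constant $3$. Step 1 is routine (Pinsker together with Cauchy--Schwarz and the golden-formula identity for $I(UV;Z)$), and the reduction to $Z\in\{z_1,z_2\}$ is immediate; one could also bypass the hand computation in Step 2 by citing an off-the-shelf conditional-entropy continuity inequality that depends only on $|\mathcal U|$.
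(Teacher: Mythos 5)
Your argument is correct, and it follows essentially the same route as the source the paper relies on: the paper gives no proof of its own for this lemma (it is quoted from \cite{ot2007} and the proof is cited there), and the cited proof is likewise Pinsker-type control of the variational distance between $P_{UV|Z=z_1}$ and $P_{UV|Z=z_2}$ in terms of $I(UV;Z)$ followed by a continuity estimate for conditional entropy; your Step 1 (Pinsker plus Cauchy--Schwarz giving $a+b\le\theta$) and Step 2 bookkeeping do recover the stated constant $3\theta\log|\mathcal U|+1$ exactly. One small repair in Step 2: rather than the vague ``absorb the few $v$ with large $t_v$'' remark, invoke the continuity bound $|H(P)-H(Q)|\le \delta\log(|\mathcal U|-1)+h(\delta)$, which is valid for \emph{all} $\delta\in[0,1]$, so no case split is needed; a naive absorption via $\log|\mathcal U|\le 2t_v\log|\mathcal U|$ for $t_v>\tfrac12$ would inflate the leading constant past $3$. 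With that substitution your per-$v$ decomposition, the cross-term bound $\sum_v P[v|z_1]t_v\le 2\theta$, Jensen on $h$, and the recentering of the $V$-marginal term are all sound.
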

\begin{proof}
See \cite{ot2007}.
\end{proof}

\begin{lemma}
\label{lem:cond_indep}
$I(K_0, K_1 , M ;  C ,N,Y^n | X^n,F^k) = 0.$
\end{lemma}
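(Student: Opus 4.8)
The plan is to prove a stronger, time-indexed version of the statement by induction over the $k$ time steps of the protocol, and then specialize to time $k$. For $0 \le i \le k$ let $j(i) := |\{i_1,\ldots,i_n\} \cap \{1,\ldots,i\}|$ denote the number of channel uses completed by time $i$, and consider the invariant
\[
  P(i): \qquad I\big(K_0,K_1,M \;;\; C,N,Y^{j(i)} \;\big|\; X^{j(i)}, F^i\big) = 0 .
\]
Since all $n$ channel uses and all $k$ public instances have occurred by time $k$ (so $j(k)=n$ and Bob's view at the end is indeed $(C,N,F^k,Y^n)$), $P(k)$ is exactly Lemma~\ref{lem:cond_indep}. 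Write $A := (K_0,K_1,M)$ for Alice's side and $B_i := (C,N,Y^{j(i)})$ for Bob's side. I will use two elementary facts: (i) if $A \perp B \mid Z$ and $W$ is a deterministic function of $(A,Z)$ then $A \perp B \mid (Z,W)$, and symmetrically if $W$ is a function of $(B,Z)$; and (ii) for the memoryless $\mathrm{BEC}$, $Y_j$ is conditionally independent of all other protocol variables given $X_j$, since its erasure noise is drawn freshly and independently.

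\textbf{Base case.} For $P(0)$ we have $j(0)=0$ and $F^0$ empty, so we must show $(K_0,K_1,M) \perp (C,N)$. This is immediate from the generation rules in the definition of an $(m,n,k)$ protocol: $K_0,K_1,C$ are mutually independent, and $M,N$ are independent of each other and of $(K_0,K_1,C)$, so the joint law factorizes and the two tuples are independent.

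\textbf{Inductive step.} Assume $P(i-1)$ and split on the event at time $i$. If time $i$ is a public transmission initiated by Alice, then by the protocol definition $F_i$ is a function of $(K_0,K_1,M,F^{i-1})$, i.e. of $(A,F^{i-1})$, and $j(i)=j(i-1)$; fact (i) with conditioning set $Z=(X^{j(i-1)},F^{i-1})$ gives $P(i)$. If time $i$ is a public transmission initiated by Bob, then $F_i$ is a function of $(C,N,F^{i-1},Y^{j(i-1)})$, i.e. of $(B_{i-1},F^{i-1})$, and again $j(i)=j(i-1)$; the symmetric form of fact (i) gives $P(i)$. If time $i=i_j$ is a channel use, then $F_i$ is empty, $X_j$ is a function of $(A,F^{i-1})$, and $j(i)=j(i-1)+1$: first apply fact (i) to adjoin $X_j$ to the conditioning, obtaining $A \perp (C,N,Y^{j-1}) \mid (X^{j},F^{i-1})$; then apply fact (ii), noting that $Y_j$ depends on the rest only through $X_j$, to adjoin $Y_j$ to Bob's side, obtaining $A \perp (C,N,Y^{j}) \mid (X^{j},F^{i-1})$, which is $P(i)$.

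Evaluating the invariant at $i=k$ gives $I(K_0,K_1,M \;;\; C,N,Y^n \mid X^n,F^k)=0$, completing the proof. The only step that needs genuine care is the channel step: one must check that adjoining $Y_j$ to the ``Bob'' side is legitimate, which amounts to a short factorization computation with conditional densities using the memorylessness of the $\mathrm{BEC}$ and the independence of its noise from all protocol variables; it is routine rather than an obstacle.
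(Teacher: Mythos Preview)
Your proof is correct and is essentially the standard inductive argument that the paper defers to (it simply cites \cite{ot2007} and Lemma~2.2 of \cite{sec-key1993} without reproducing the proof). One cosmetic remark: the lemma and the cited results hold for an arbitrary discrete memoryless channel $p_{Y|X}$, so your invocation of ``the BEC'' in fact~(ii) is an unnecessary specialization; replacing it by ``memoryless channel'' makes the argument match the generality of the statement.
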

\begin{proof}
See \cite{ot2007} or Lemma 2.2 of \cite{sec-key1993}.
\end{proof}
Note that (\ref{eqn:1ach_rate_3}) and Lemma~\ref{lem:lemma3_ahl_csis} together imply
\begin{align*}
H(K_0 | X^n,F^k,C = 0) - H(K_0 | X^n,F^k,C = 1) & = o(n), \\
H(K_1 | X^n,F^k,C = 0) - H(K_1 | X^n,F^k,C = 1) & = o(n).
\end{align*}
Multiplying both equations by $\frac{1}{2}$ and subtracting, we get
\begin{equation}
\label{eqn:lemma3_infr3}
H(K_C | X^n,F^k,C) - H(K_{\overline{C}} | X^n,F^k,C) = o(n).
\end{equation}
Lemma~\ref{lem:cond_indep} implies that $I(K_0,K_1 ; C | X^n,F^k) = 0$. Hence,
\begin{align*}
H(K_0,K_1 &| X^n,F^k)\\
                     &=  H(K_0,K_1 | X^n,F^k,C) \\
                     &=  H(K_C,K_{\overline{C}} | X^n,F^k,C) \\
                     &=  H(K_C | X^n,F^k,C) + H(K_{\overline{C}} | X^n,F^k,C, K_C) \\
                     &\leq  H(K_C | X^n,F^k,C) + H(K_{\overline{C}} | X^n,F^k,C).
\end{align*}
In light of (\ref{eqn:lemma3_infr3}), this lemma will be proved if we show either $H(K_C | X^n,F^k,C)$ or $H(K_{\overline{C}} | X^n,F^k,C)$ to be $o(n)$.

For this we note that Lemma~\ref{lem:cond_indep} implies \[I(K_0,K_1 ; N,Y^n | X^n,F^k,C) = 0.\] This, in turn, implies that \[I(K_C,K_{\overline{C}} ; N,Y^n | X^n,F^k,C) = 0.\] Hence, $I(K_C ; N,Y^n | X^n,F^k,C) = 0$. Therefore,
\begin{align*}
H(K_C | X^n,F^k,C) &= H(K_C | X^n,F^k,C, N, Y^n) \\
                   &\stackrel{\text{(a)}}{=}  H(K_C | X^n,F^k,C, N, Y^n, \hat{K}_C) \\
                   &\leq  H(K_C | \hat{K}_C) \\
                   &\stackrel{\text{(b)}}{=}  o(n), \end{align*}
where (a) follows from the fact that since $\hat{K}_C$ is a function of ($C,N,Y^n,F^k$), and (b) from \eqref{eqn:1ach_rate_1} and Fano's inequality.

\section{A useful Lemma}
\label{sec:proof_of_random_coding_lemma}
In the following, for $r' < 1$, we consider the construction of a 
random ($n, nr'$) binary code ${\cal C}$, generated with i.i.d. Ber($\frac{1}{2}$) 
components. For any code $C$, let $X^C$ be a random codeword 
picked uniformly from the code. For any set $J \subset [1:n]$, let 
$X^C|_J$ denote the components of $X^C$ in $J$.
$H(X^C|_J)$ is the entropy of the vector $X^C|_J$.
This is a function of the code, and so $H(X^{\cal C}|_J)$ is a random variable under the
above random code construction. The following lemma states that
for any $r<r^\prime$ and $\beta<r^\prime -r$, if $|J|\geq nr$, then with 
high probability over the code, $H(X^{\cal C}|_J)\geq nr - 2^{-n\beta}$. 

\begin{lemma}
\label{lem:random_coding_lemma}
Let $r < r' < 1$, and ${\cal C}$ be a random ($n, nr'$) binary code generated 
with i.i.d. Ber($\frac{1}{2}$) components. Let $X^{\cal C}|_J$ be as defined 
above. Then for any $\beta$, $0 < \beta < (r' - r)$, with high probability, the code satisfies the following property: \hspace*{2mm}
for any set $J \subset [1:n]$ with $|J| \geq nr$, $H(X^{\cal C}|_J) \geq nr - 2^{-n\beta}$. 
\end{lemma}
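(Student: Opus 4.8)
The plan is to fix a set $J$ with $|J| = \lceil nr \rceil$ (enlarging $J$ only increases $H(X^{\cal C}|_J)$, so it suffices to treat minimal sets), bound the probability that the random code fails the entropy property on this particular $J$, and then union-bound over the at most $2^n$ choices of $J$. For a fixed code $C$ and fixed $J$, the vector $X^C|_J$ is the uniform distribution over the (multiset of) $2^{nr'}$ projected codewords $\{x^c|_J : c \in C\}$; its entropy is close to $nr'$ unless many codewords collide on the coordinates in $J$. Since $nr' > nr \geq |J|$, we actually expect $X^C|_J$ to be essentially uniform on $\{0,1\}^{|J|}$, so the real content is a lower bound of the form $H(X^{\cal C}|_J) \geq nr - 2^{-n\beta}$, which is slightly below $|J|$.

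The key computation is a second-moment / concentration estimate on the collision structure. First I would write $H(X^C|_J) = |J| - D\big(P_{X^C|_J} \,\|\, \mathrm{Unif}\big)$ and control the divergence by the collision entropy: $H(X^C|_J) \geq H_2(X^C|_J) = -\log \sum_{a \in \{0,1\}^{|J|}} \widehat{p}(a)^2$, where $\widehat{p}(a)$ is the empirical fraction of codewords whose $J$-projection equals $a$. The collision probability $\sum_a \widehat{p}(a)^2$ equals the probability that two independently, uniformly chosen codewords agree on $J$; since the codewords are i.i.d. Ber($\tfrac12$) strings, a pair of distinct codewords agrees on $J$ with probability exactly $2^{-|J|}$, so in expectation $\mathbb{E}\big[\sum_a \widehat{p}(a)^2\big] = \tfrac{1}{2^{nr'}} + (1 - \tfrac{1}{2^{nr'}}) 2^{-|J|} \leq 2^{-|J|} + 2^{-nr'}$. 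Using $|J| \geq nr$ and $r < r'$, this is at most $2^{-nr}(1 + 2^{-n(r'-r)})$. A Markov bound then says that except with probability $2^{-n\beta'}$ (for any $\beta' < r' - r$ after absorbing constants), the code has $\sum_a \widehat{p}(a)^2 \leq 2^{-nr}(1 + 2^{-n(r'-r-\beta')})$, whence $H(X^C|_J) \geq H_2(X^C|_J) \geq nr - \log(1 + 2^{-n(r'-r-\beta')}) \geq nr - 2^{-n(r'-r-\beta')}$, using $\log(1+x) \leq x/\ln 2 \leq 2x$ for small $x$. Choosing $\beta' $ so that $r' - r - \beta' > \beta$ (and redoing the Markov step at exponent $\beta'$) gives the per-$J$ bound $nr - 2^{-n\beta}$ with failure probability exponentially small, but I must make the failure exponent beat the union bound.

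The main obstacle is precisely reconciling the two exponents: the Markov bound over a single $J$ gives failure probability only $O(2^{-n\beta'})$ with $\beta' < r' - r$, which does \emph{not} survive the union bound over $2^n$ sets $J$. To fix this I would instead apply a higher-moment or exponential-moment (Chernoff-type) bound on the collision count $N_J := \#\{(c,c') : c \neq c',\ x^c|_J = x^{c'}|_J\}$, which is a sum of $\binom{2^{nr'}}{2}$ pairwise-dependent indicator variables; because any two codewords' agreement events on $J$ are mutually independent across disjoint pairs and only weakly dependent across overlapping pairs, $N_J$ concentrates around its mean $\mu_J \leq 2^{2nr' - nr}$ tightly enough that $P[N_J \geq 2\mu_J] \leq 2^{-cn 2^{nr'}}$ or at least $\leq 2^{-\omega(n)}$, which crushes the $2^n$ union bound. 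Then on the good event, $\sum_a \widehat p(a)^2 = 2^{-nr'} + 2^{-2nr'} N_J \leq 2^{-nr'} + 2^{-nr+1}$, and the same collision-entropy bound yields $H(X^{\cal C}|_J) \geq nr - 2^{-n(r'-r)} \cdot \text{const}$, which is $\geq nr - 2^{-n\beta}$ for all large $n$ since $\beta < r' - r$. I would state the result as holding "with high probability" (probability $\to 1$), so getting any super-polynomial — indeed exponential — decay in the failure probability is enough; the delicate part is just verifying the concentration of $N_J$ is strong enough to absorb the union bound, for which a bounded-differences (McDiarmid) argument on the i.i.d. codeword symbols, or a direct variance plus Chebyshev refined by a moment-generating-function bound, should suffice.
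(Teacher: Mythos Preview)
Your strategy has a genuine gap at the very step you flag as ``delicate''. First, there is an arithmetic slip in the final paragraph: from $N_J \leq 2\mu_J$ you get collision probability $\sum_a \hat p(a)^2 \leq 2^{-nr'} + 2\cdot 2^{-nr} = 2^{-nr}\bigl(2 + 2^{-n(r'-r)}\bigr)$, hence $H_2 \geq nr - \log\bigl(2+o(1)\bigr) \approx nr - 1$, \emph{not} $nr - \text{const}\cdot 2^{-n(r'-r)}$. The multiplicative factor $2$ in front of $2^{-nr}$ costs you an additive bit, which is far larger than the $2^{-n\beta}$ slack the lemma demands. To reach $H_2 \geq nr - 2^{-n\beta}$ you actually need the much tighter event $N_J \leq \mu_J\bigl(1 + O(2^{-n\beta})\bigr)$.

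Second, the concentration tools you name do not deliver that tighter event with probability $1-o(2^{-n})$. McDiarmid on the $N=2^{nr'}$ codewords has bounded-difference constant $N-1$ (changing one codeword flips up to $N-1$ pair indicators), so the exponent is of order $t^2/N^3$; with $t = 2^{-n\beta}\mu_J$ and $\mu_J\approx N^2 2^{-nr}$ this is $2^{n(r'-2r-2\beta)}$, which need not even be positive. Chernoff on individual bin counts $M_a$ fares better but still only yields the range $\beta < (r'-r)/2$, half of what the lemma asserts. So ``McDiarmid or an MGF bound should suffice'' is not justified as written.

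The paper avoids the collision-entropy detour entirely and works with the Shannon entropy directly via Sanov's theorem. The projected codewords $Y_i = X_i|_J$, $i=1,\ldots,N=2^{nr'}$, are i.i.d.\ uniform on $\{0,1\}^{nr}$, and $H(\hat p_J) = nr - D(\hat p_J\|u)$. Sanov gives
\[
\Pr\bigl[D(\hat p_J\|u) > 2^{-n\beta}\bigr] \;\leq\; (N+1)^{2^{nr}}\, 2^{-N\cdot 2^{-n\beta}}
\;\leq\; 2^{-2^{nr}\bigl(2^{n(r'-r-\beta)} - n r' - 1\bigr)},
\]
which is doubly exponentially small for every $\beta < r'-r$, so the union bound over at most $2^n$ choices of $J$ is immediate. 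The key point is that Sanov's exponent is $N\cdot D(p^*\|u) = 2^{nr'}\cdot 2^{-n\beta}$, already doubly exponential, whereas the second-moment route loses this because $H_2$ is governed by a sum of $\binom{N}{2}$ weakly dependent indicators rather than by $N$ i.i.d.\ samples.
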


\begin{proof}
Wlog, let us assume $|J|=nr$. Let the $N=2^{nr^\prime}$ random codewords be 
denoted by $X_i;\, i=1,2,\cdots, N$. Let us denote the codeword components
in $J$ by $Y_i := X_i|_J$. By assumption, 
clearly $Y_i$ are independent and uniformly distributed
over the $2^{nr}$ binary strings in $\{0,1\}^{nr}$. Let us denote
the empirical distribution of $Y_i; \, i=1,2,\cdots, N$ as $\hat{p}_J$.
We are interested in the entropy $H(\hat{p}_J)$.

By Sanov's theorem, 
\begin{align*}
Pr \left[H(\hat{p}_J) < nr - 2^{-n\beta}\right] & \leq (N+1)^{2^{nr}} 2^{-ND(p^*||u)}
\end{align*}
where $u$ dentoes the uniform distribution over $\{0,1\}^{nr}$, and
\begin{align*}
p^* = \arg\min_{p:H(p) < nr - 2^{-n\beta}} D(p||u). &
\end{align*}
Clearly,
\begin{align*}
D(p^*||u) = nr - H(p^*) > 2^{-n\beta}. &
\end{align*}
So
\begin{align*}
Pr \left[H(\hat{p}_J)< nr - 2^{-n\beta}\right] & < (2^{nr^\prime} +1)^{2^{nr}} 2^{-2^{nr^\prime}\cdot 2^{-n\beta}} \\
& < 2^{n(r^\prime +1/n)\cdot 2^{nr}} \cdot 2^{-2^{nr^\prime}\cdot 2^{-n\beta}} \\
& \leq 2^{-2^{nr}(2^{n(r^\prime -r -\beta)}-n(r^\prime +1/n))}.
\end{align*}
By union bound,
\begin{align*}
& Pr \left[H(\hat{p}_J)< nr - 2^{-n\beta} \text{ for some } J\right]   \\
& \hspace*{10mm} \leq {n \choose nr} 2^{-2^{nr}(2^{n(r^\prime -r -\beta)}-n(r^\prime +1/n))}\\
& \hspace*{10mm} \leq 2^{nH(r)}\cdot 2^{-2^{nr}(2^{n(r^\prime -r -\beta)}-n(r^\prime +1/n))}.
\end{align*}
Since $\beta < r^\prime -r$, the above upper bound goes to zero as $n \longrightarrow \infty$.

\end{proof}

\end{document}